  \providecommand\BibTeX{{%
    \normalfont B\kern-0.5em{\scshape i\kern-0.25em b}\kern-0.8em\TeX}}}
\newcommand{\tabincell}[2]{\begin{tabular}{@{}#1@{}}#2\end{tabular}}
\newtheorem{prop}{Proposition}
\begin{document}

\title{Adversarial Representation Sharing: A Quantitative and Secure Collaborative Learning Framework}

\author{Jikun Chen}
\email{cjk7989@sjtu.edu.cn}
\affiliation{%
	\institution{Shanghai Jiao Tong University}
}

\author{Feng Qiang}
\email{f.qiang@gmail.com}
\affiliation{%
	\institution{ }
}

\author{Na Ruan}
\authornote{Corresponding author.}
\email{naruan@cs.sjtu.edu.cn}
\affiliation{%
  \institution{Shanghai Jiao Tong University}
}

\settopmatter{printacmref=false} 
\renewcommand\footnotetextcopyrightpermission[1]{} 
\pagestyle{plain} 

\begin{abstract}
  The performance of deep learning models highly depends on the amount of training data. It is common practice for today's data holders to merge their datasets and train models collaboratively, which yet poses a threat to data privacy. Different from existing methods such as secure multi-party computation (MPC) and federated learning (FL), we find representation learning has unique advantages in collaborative learning due to the lower communication overhead and task-independency. However, data representations face the threat of model inversion attacks. In this article, we formally define the collaborative learning scenario, and quantify data utility and privacy. Then we present ARS, a collaborative learning framework wherein users share representations of data to train models, and add imperceptible adversarial noise to data representations against reconstruction or attribute extraction attacks. By evaluating ARS in different contexts, we demonstrate that our mechanism is effective against model inversion attacks, and achieves a balance between privacy and utility. The ARS framework has wide applicability. First, ARS is valid for various data types, not limited to images. Second, data representations shared by users can be utilized in different tasks. Third, the framework can be easily extended to the vertical data partitioning scenario.
\end{abstract}

\keywords{privacy, collaborative learning, adversarial examples, quantification, trade-off}

\maketitle

\section{Introduction}
\label{sect:introduction}
Deep learning has made great progress in a variety of fields, such as computer vision, natural language processing and recommendation systems. This impressive success largely attributes to the increasing amount of available computation and datasets \cite{goodfellow2016deep}. Companies and institutions require large-scale data to build stronger machine learning systems, whereas data is generally held by distributed parties. Therefore, it is a common practice for multiple parties to share data and train deep learning models collaboratively \cite{ohrimenko2016oblivious}. In most of collaborative training scenarios, users provide their local data to cloud computing services or share data with others, which brings privacy concerns.

Due to security considerations and privacy protection regulations, it is inappropriate to exchange data among different organizations. Obviously, sharing raw data directly may cause a leakage of private and sensitive information contained in datasets. For instance, if some hospitals hope to integrate their patients' information to establish models for disease detection, they must carefully protect the identity of patients from being obtained and abused by any partner or possible eavesdropper. The problem of "data islands" requires a privacy-preserving collaborative framework.

Generally speaking, privacy-preserving collaborative learning can be grouped under two approaches. Earlier works focus on MPC, which ensures that multiple users can jointly calculate a certain function while keeping their inputs secret without the trusted third party. In a joint learning framework, data in local devices are encrypted by cryptographic tools before being shared \cite{mohassel2017secureml,agrawal2019quotient}. Cryptographic methods contain garbled circuits (GC) \cite{yao1986generate}, secret sharing (SS) \cite{paillier1999public}, homomorphic encryption (HE) \cite{gentry2009fully}, etc. However, current cryptographic approaches can just perform several types of operations, and only propose friendly alternatives to some of non-linear functions \cite{mohassel2017secureml}. Moreover, encryption often causes high computation and communication cost.

Shokri et al. \cite{shokri2015privacy} proposed a distributed stochastic gradient descent algorithm to replace the sharing training data framework. The method is now well known as federated learning (FL) \cite{hard2018federated} and has a wide application in practice. In FL, a cloud server builds a global deep learning model. For each training iteration, the server randomly sends the model to a part of client devices. The clients then optimize the model locally, and send the updates back to the server to aggregate them. Only parameters of the model are communicated, while the training data is retained by the local device, which ensures the privacy. Some recent works combine federated learning with other information security mechanisms (e.g., differential privacy) to further improve privacy \cite{geyer2017differentially}. However, the communication cost between each local device and the central server is high. After each iteration of training process, each user needs to keep their local deep learning model synchronized.

Different from the above approaches, we consider representation learning \cite{bengio2013representation} to solve this problem. The idea is inspired by deep neural networks, which embed inputs into real feature vectors (representations). Containing high-level features of the original data, latent representations is efficient to various downstream machine learning tasks \cite{goodfellow2016deep}. The motivation is that both MPC and FL conduct collaborative learning with limited task-applicability. Once the machine learning task changes, the entire training process needs to be executed again, which incurs high communication cost. Comparatively speaking, data representations are task-independent and thus has unique potential in joint learning. Some recent works have studied privacy-preserving data representations \cite{xiao2019adversarial,hitaj2017deep}, but few of them gave further discussion on the collaborative learning scenario. The primary problem in privacy representations learning field is to defend against model inversion attacks\cite{mahendran2015understanding,he2019model}, which aims to train inverse models to reconstruct original inputs or extract private attributes from shared data representations.

In this article, we propose \textbf{\textit{ARS}} (for Adversarial Representation Sharing), a decentralized collaborative learning framework based on data representations. Our work contains two levels: (i) a collaborative learning framework wherein users share data representations instead of raw data for further training; and (ii) an imperceptible adversarial noise added to shared data representations to defend against model inversion attacks. ARS helps joint learning participants "encode" their data locally, then add adversarial noise to representations before sharing them. The published \textit{adversarial latent representation} can defend against reconstruction attacks, thereby ensuring privacy.

Owning to the good qualities of latent representations, ARS has wide applicability. First, ARS is valid for various data types as training samples, not limited to images. Second, ARS is task-independent. Shared data representations are reusable for various tasks. Third, prior joint learning frameworks are commonly designed under scenarios of horizontal data partitioning (in which datasets of users share the same feature space but differ in sample ID space), whereas ARS can easily extended its framework to the vertical data partitioning scenario (in which datasets of different users share the same sample Ids but differ in feature columns) \cite{yang2019federated}.

Based on the collaborative learning framework, we apply adversarial example noise \cite{goodfellow2014explaining} to protect shared representations from model inversion attacks. The intuition is that adding special-designed small perturbations on shared data representations can confuse the adversaries so that they cannot reconstruct the original data or particular private attributes from the obfuscated latent representations. By simulating the behavior of attackers, we generate adversarial noise for potential inverse models. The noise is added to data representations before sharing them, in order to make it hard to recover the original inputs. In the meantime, the scale of these perturbations are too small to influence data utility. We propose defense strategies against reconstruction attacks and attribute extraction attacks respectively.

The main contributions of our work are summarized as follows:
\begin{itemize}
	\item We propose ARS, a new paradigm for collaborative framework which is based on representation learning. Different from MPC and FL, ARS is decentralized and has wide applicability.
	
	\item We introduce adversarial noise to defend against model inversion attacks. To the best of our knowledge, we are the first to apply adversarial examples to ensure privacy in collaborative learning.
	
	\item We evaluate our mechanism on multiple datasets and aspects. The results demonstrate that ARS achieves a balance between privacy and utility. We further discuss the limitations and challenges of our work.
\end{itemize}

The remainder of the paper is organized as follows. We first review related work in Section \ref{sect:related_work}. Then we introduce how ARS achieves collaborative learning from an overall perspective in Section \ref{sect:overview}, and detail how adversarial noise is applied to defend against model inversion attacks and ensure privacy in Section \ref{sect:adversarial_noise}. Experimental results are shown in Section \ref{sect:experiments}. In Section \ref{sect:discussion}, we discuss the details and challenges of the work. Finally, we conclude the work in Section \ref{sect:conclusion}.

\section{Related Work}
\label{sect:related_work}

\subsection{Privacy-Preserving Representation Learning}

To avoid privacy leakage in collaborative learning, some previous works focus on learning privacy-preserving representations \cite{xiao2019adversarial,Ferdowsi2020PrivacyPreservingIS}. Latent representations retain the abstract features of data, which can be used for further analysis like classification or regression. Generally, the distribution of data representations can be learned by unsupervised latent variable model, such as autoencoders \cite{ng2011sparse}. Meanwhile, the original information of data can not be directly inferred from representations. For example, in natural language processing, words are transformed into vectors by embedding networks, which is called word2vec \cite{church2017word2vec}. Without embedding networks, it is difficult to recover original words from embedding vectors.

However, data representations are still vulnerable to model inversion attacks \cite{he2019model}. Adversaries can build reconstruction networks to recover original data or reveal some attributes of data from shared representations, even though they have no knowledge of the structure or parameters of the feature extraction models \cite{mahendran2015understanding,he2019model}. For example, they can recover face samples, or infer gender, age and other personal information from shared representations of face images, which were only supposed to be used for training face recognition models.

In order to defend against inversion attacks, recent works focus on adversarial training \cite{xiao2019adversarial} or generative adversarial networks (GANs) \cite{hitaj2017deep}. Attackers' behaviors are simulated by another neural network while learning privacy representations of data, and the two networks compete against each other to improve the robustness of representations. However, since it's hard to achieve a balance between the attacker models and defender models during training \cite{salimans2016improved}, these methods may cost much time in the pretreating phase. Ferdowsi et al. \cite{Ferdowsi2020PrivacyPreservingIS} generate privacy representations by producing sparse codemaps. The above defense methods could be applied to collaborative learning, but the authors didn't give a further discussion on this scenario. In addition, all these methods are task-oriented. When the task of shared data changes, they need to generate new task-oriented data representation again.

\subsection{Adversarial Examples}

Adversarial examples are perturbed inputs designed to fool machine learning models \cite{goodfellow2014explaining}. Formally, we denote by $f: \mathcal{X} \rightarrow \{1, \dots, n\}$ a classifier. For an input $x \in \mathcal{X}$ and a label $l = f(x)$, we call a vector $r$ an adversarial noise if it satisfies:
\begin{equation*}
	\|r\| \leq \epsilon, f(x+r) \neq l,
\end{equation*}
where $\epsilon$ is a small hyper-parameter to adjust the scale of noise.

Adversarial examples have strong transferability. Some works \cite{liu2017delving} have shown that adversarial examples generated for a model can often confuse another model. This property is used to execute transferability based attacks \cite{papernot2017practical}. Even if an attacker has no knowledge about the details of a target model, it can still craft adversarial examples successfully by attacking a substitute model. Therefore, adversarial examples have become a significant threat to machine learning models \cite{goodfellow2014explaining,samangouei2018defense}.

Except for treating adversarial examples as threats, some works utilize the properties of adversarial examples to protect user's privacy \cite{sharif2016accessorize}. In this work, we also use adversarial noise to defend against machine learning based inferring attacks. To the best of our knowledge, we are the first to apply adversarial examples to data sharing mechanisms for collaborative learning.

\begin{figure*}[htbp]
	\centering
	\includegraphics[width=0.9\linewidth]{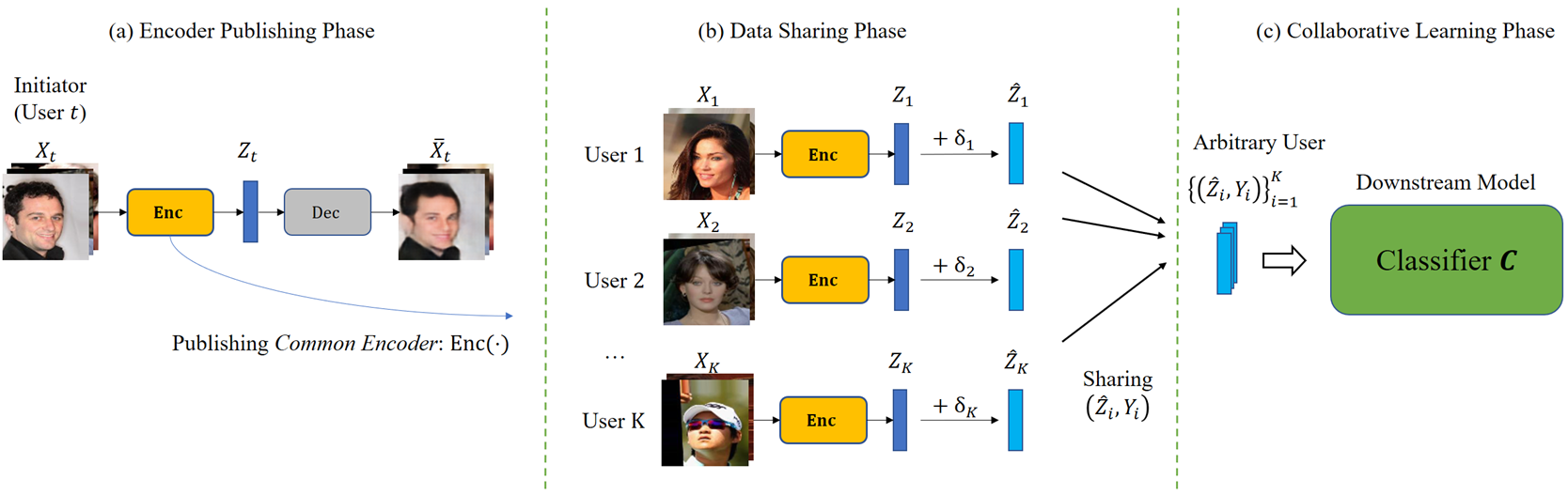}
	\caption{An overview of our basic encoding-based privacy-preserving data sharing mechanism. }
	\label{figure:basic_mechanism}
\end{figure*}

\section{Overview of ARS framework}
\label{sect:overview}

ARS achieves collaborative learning by helping users share data representations instead of raw data to train models. In this section, we first present a joint learning scenario, and propose standards for evaluating the effectiveness of a framework. Then we introduce how ARS works in the whole collaborative learning process. Specifically, how participants encoder their data into latent representations, how to share data representations with others, and how shared data is used for various machine learning tasks.

\subsection{Collaborative Learning Scenario}
\label{sect:data_sharing_scenario}

Consider the scenario where $K$ parties share local data for collaborative training. Note that data belonging to distributed parties can be partitioned horizontally (which means datasets shares same feature space but differ in sample ID space) or vertically (which means datasets shares different same feature space while sample ids are same). For simplicity, we assume data is partitioned horizontally (extension to vertical data partitioning in Section \ref{sect:extension}). The dataset of the $i$-th party is represented as $\{(X_i,Y_i)\} = \{(x_i^1,y_i^1),$ $(x_i^2,y_i^2), \dots, (x_i^{N_i},y_i^{N_i})\}$, where $(x_i^j,y_i^j)$ is a pair of training sample and corresponding label, and $N_i$ is the number of samples. The goal of each participant is to encode their data $X_i$ into representations $Z_i = \{z_i^1, z_i^2, \dots, z_i^{N_i}\}$, and finally train deep learning models on ${Z_1, Z_2, \dots, Z_K}$ which is published by all the data owners.

To privacy concerns, participants have to defend against model inversion attacks that reconstruct original inputs from latent representations. Furthermore, there might be some sensitive features or attributes in training samples. Any label of a training sample can become the machine learning object or a private attribute, depending on the particular user and tasks. For example, when building a disease detection model, the illness of each patient can be a public label, while the identity information such as gender, age should be private. We denote by $M$ the number of private attributes that the $i$-th party has, and denote the set of labels of $k$-th private attributes by $A_{ik} = \{a_{ik}^1, a_{ik}^2, \dots, a_{ik}^{N_i}\}$, where $k \in \{1, 2, \dots, M\}$. Attackers can also conduct attribute extraction attacks to recover these private attributes.

We highlight the following notations which are important to our discussion.

\begin{itemize}
	\item $(X_i, Y_i)$: original training set of user $i$, and label set depending on deep learning tasks.
	
	\item $(x_i^j, y_i^j)$: the $j$-th sample and corresponding label of user $i$.
	
	\item $a_{ik}^j$: value (label) of the $k$-th private attribute of $x_i^j$.
	
	\item $\mathrm{Enc}(\cdot)$: feature extractor which encode inputs into latent representations.
	
	\item $\theta_{\mathrm{Enc}}$: parameters of the model $\mathrm{Enc}$.
	
	\item $z_i^j$: latent representation of the $j$-th sample and label of user $i$, which satisfies $z_i^j=\mathrm{Enc}(x_i^j)$.
	
	\item $C(\cdot)$: downstream deep learning model which takes $\{(Z_i,Y_i)\}_{i=1}^K$ as the training set.
	
	\item $\mathrm{Dec}(\cdot)$: (theoretical) inverse model of $\mathrm{Enc}$, which maps representations to reconstruction of inputs.
	
	\item $\mathrm{SDec}(\cdot)$: decoder trained by adversaries to conduct model inversion attacks, aiming to reconstruct inputs from shared data representations.
\end{itemize}

\subsection{Quantitative Criteria of ARS Mechanism}
\label{sect:effectiveness}

Formally, given a sample set $X$ and a label set $Y$, the main goal of data sharing in ARS is to design a mapping $\mathrm{Enc}: \mathcal{X}\longrightarrow \mathcal{Z}$ that has "nice" properties. These properties consists of utility and privacy of shared data representations as follows. 

\subsubsection{Utility}

After obtaining the shared data, each user can use dataset $\{(Z_1,Y_1),(Z_2,Y_2),$ $\dots, (Z_K,Y_K)\}$ to train downstream deep learning model $C: \mathcal{Z}\longrightarrow \mathcal{Y}$, such as classifiers to predict the label $y$ from $z$. Data utility requires the accuracy to approach the results of models trained directly on $\{(X_i, Y_i)\}_{i=1}^K$. Therefore, $\mathrm{Enc}$ and $C$ should be optimized by minimizing the following expectation:
\begin{equation}
	\mathbb{E}_{x,y \sim p(xy)} \left ( \left | \mathbb{E}_{z\sim p_{\mathrm{Enc}}(z|x)} \left [p_C(y|z) \right ] - p(y|x) \right | \right ),
	\label{equ:utility}
\end{equation}
which ensures that encoding data into representations will not lose much data utility.

\subsubsection{Privacy}

Privacy characterizes the difficulty of finding a model $\mathrm{Dec}: \mathcal{Z} \longrightarrow \mathcal{X}$ to recover raw data, such as visualization information of  inputs of picture type.  Without loss of generality, we focus on reconstruction attacks (attribute extraction attacks is discussed in Section \ref{sect:extension}). Therefore, ARS minimizes privacy leakage $\mathcal{L}$, which can be defined as reconstruction loss:
\begin{equation}
	\mathcal{L} = \mathcal{L}_R = -\frac{1}{N}\sum_{j=1}^{N} \mathcal{D}\left ( x^j, \mathrm{Dec}(\mathrm{Enc}(x^j)) \right ),
	\label{equ:LR}
\end{equation}
where distance function $\mathcal{D}(\cdot, \cdot)$ is used to describe the similarity of original and reconstructed samples. In practice, the distance function is often defined as $l_p$ norm. In computer vision, distance between two images can also be measured as PSNR or SSIM \cite{hore2010image}.

If we consider attribute extraction attacks, we similarly use feature loss to indicate how possible an attacker can predict private features successfully. Note that we don't adopt the error between the true value of private attributes and prediction results of attackers, in case adversaries break the defense by just flipping their results. Instead, we calculate the distance between the prediction result and a fixed vector. The feature loss corresponding to the $k$-th privacy attribute is defined as:
\begin{equation}
	\mathcal{L}_{A_k} = \frac{1}{N}\sum_{j=1}^{N} \mathcal{D}\left ( r_k, F_k(\mathrm{Enc}(x^j)) \right ),
\end{equation}
where $r_k$ is a fixed vector generated randomly, whose size is the same as $a_{ik}$. $F_k$ is a corresponding attribute extraction network trained by adversaries. Low $\mathcal{L}_{A_k}$ implies meaningless prediction results. In this condition, the encoding system minimizes the overall generalization loss:
\begin{equation}
	\mathcal{L} = \lambda_0 \mathcal{L}_R +\sum_{k=1}^M \lambda_k \mathcal{L}_{A_k},
\end{equation}
where $\sum_{k=0}^M \lambda_k = 1$. The value of $\lambda_k$ is assigned depending on users' privacy requirements.

\subsection{The ARS Collaborative Learning Framework}
\label{sect:framework}

We apply autoencoder \cite{ng2011sparse}, an unsupervised neural network to learn data representations. Autoencoder can be divided into an encoder part and a decoder part. The encoder transforms inputs into latent representations, while the decoder map representations to reconstruction of inputs with size smaller than inputs. The optimization target of the autoencoder is to minimize the difference between original inputs and reconstructed ones. Since the encoder compress the feature space, latent representations are wished to remain high-level features of input data \cite{erhan2010does}.

As shown in Fig. \ref{figure:basic_mechanism}, the ARS collaborative learning framework consists of three phases:
\begin{itemize}
	\item \textit{Encoder publishing phase}. In this phase, a common encoder $\mathrm{Enc}$ is trained and published to all participants. The user having permission to train $\mathrm{Enc}$ is called \textit{initiator}, who can be selected randomly, or the party owning the most amount of data. The mechanism in which users shares the same encoder is to ensure that each column of representation vectors shared by any user expresses a specific meaning. Suppose that user $t$ is chosen, it first trains an autoencoder on its local dataset and then publish the encoder part ($\mathrm{Enc}$), as:
	\begin{equation}
		\begin{aligned}
			&\theta_{\mathrm{Enc}}, \theta_{\mathrm{Dec}} = \mathop{\arg \min}_{\theta_{\mathrm{Enc}}, \theta_{\mathrm{Dec}}}\ \frac{1}{N_t}\sum_{j=1}^{N_t} \mathcal{D} \left ( x_t^j, \bar{x}_t^j \right ), \\
			&\mbox{where} \quad \bar{x}_t^j = \mathrm{Dec}(\mathrm{Enc}(x_t^j)).
		\end{aligned}
	\end{equation}
	
	\item \textit{Data sharing phase}. After the common encoder $\mathrm{Enc}$ is published, user $i$ encode its data into representations as: $z_i^j = \mathrm{Enc}(x_i^j)$. Then it shares the pair $\{(Z_i,Y_i)\}$ with the other parties. (In fact, users generate noise $\delta$ for each data representation, then publish $\hat{z}=z+\delta$ instead of $z$ to defend against model inversion attacks. See Section \ref{sect:adversarial_noise}).
	
	\item \textit{Collaborative learning phase}. Finally, each participant can use the pairs $\{(Z_i,Y_i)\}_{i=1}^K$ to train deep learning models locally. For example, model $C: \mathcal{Z}\longrightarrow \mathcal{Y}$ can be trained by minimize the \textit{empirical risk}:
	\begin{equation}
		\hat{R}(C) = \frac{1}{N}\sum_{j=1}^{N} \mathcal{L}(y^j, C(z^j)),
	\end{equation}
	where $N$ is the amount of $(z,y)$ pairs shared by all users, and $\mathcal{L}$ is the loss function of model $C$.
	
\end{itemize}

The ARS framework provides data utility from two aspects. First, accuracy of the deep learning task (i.e. prediction of label $y$) is ensured. Equation (\ref{equ:utility}) can be explained as: for any $x$ sampled from the input distribution and its ground truth label $y$, we aim to maximize $\mathbb{E}_{z\sim p_{\mathrm{Enc}}(z|x)} \left [p_C(y|z) \right ]$ to make it close to $p(y|x)$ (which should be slightly less than $1$). The goal is equivalent to minimize the \textit{generalization error} of $C$: 
\begin{equation}
	R(C) = \mathbb{E}_{x,y \sim p(xy)} \left [\mathbb{E}_{z\sim p_{\mathrm{Enc}}(z|x)} [ \mathcal{L}(y, C(z))] \right ].
\end{equation}
Empirically, models with less empirical risk (training error) normally has less generalization error as well. ARS reduces $R(C)$ by decreasing $\hat{R}(C)$ in the collaborative learning phase. If dataset $\{X_i, Y_i\}_{i=1}^K$ is representative and $\mathrm{Enc}$ is well trained, the joint learning framework will not lose much accuracy.

Second, shared data representations are adaptive to various tasks. Since $\mathrm{Enc}$ is trained without any supervision or knowledge about collaborative learning tasks, we have $p(z|x)=p(z|xy)$ indicating that latent representations are task-independent. Meanwhile, data representations are valid for downstream tasks, which can be explained by a basic idea that learning about the input distribution helps with learning about the mapping from inputs to outputs \cite{goodfellow2016deep}. A well trained autoencoder remains high level features of inputs, which are also useful for supervised learning tasks.

\begin{figure*}[htbp]
	\begin{minipage}[t]{0.49\linewidth}
		\centering
		\includegraphics[width=\linewidth]{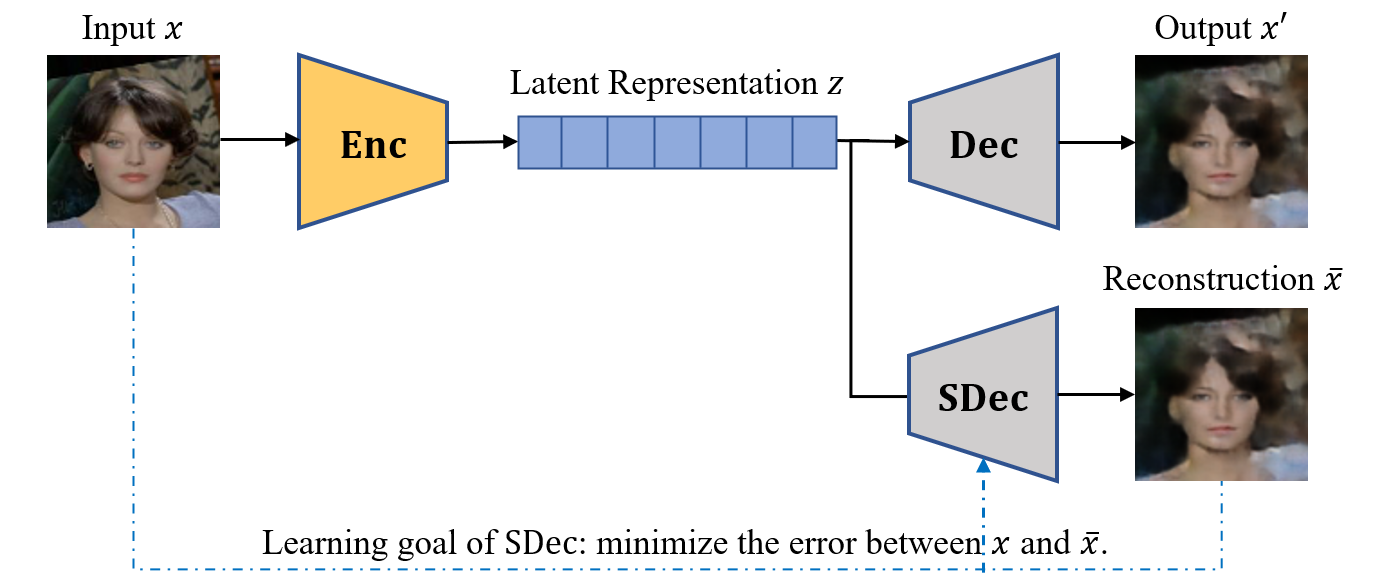}
		\caption{Model inversion attack to reconstruct samples.}
		\label{figure:SDec}
	\end{minipage}
	\begin{minipage}[t]{0.49\linewidth}
		\centering
		\includegraphics[width=\linewidth]{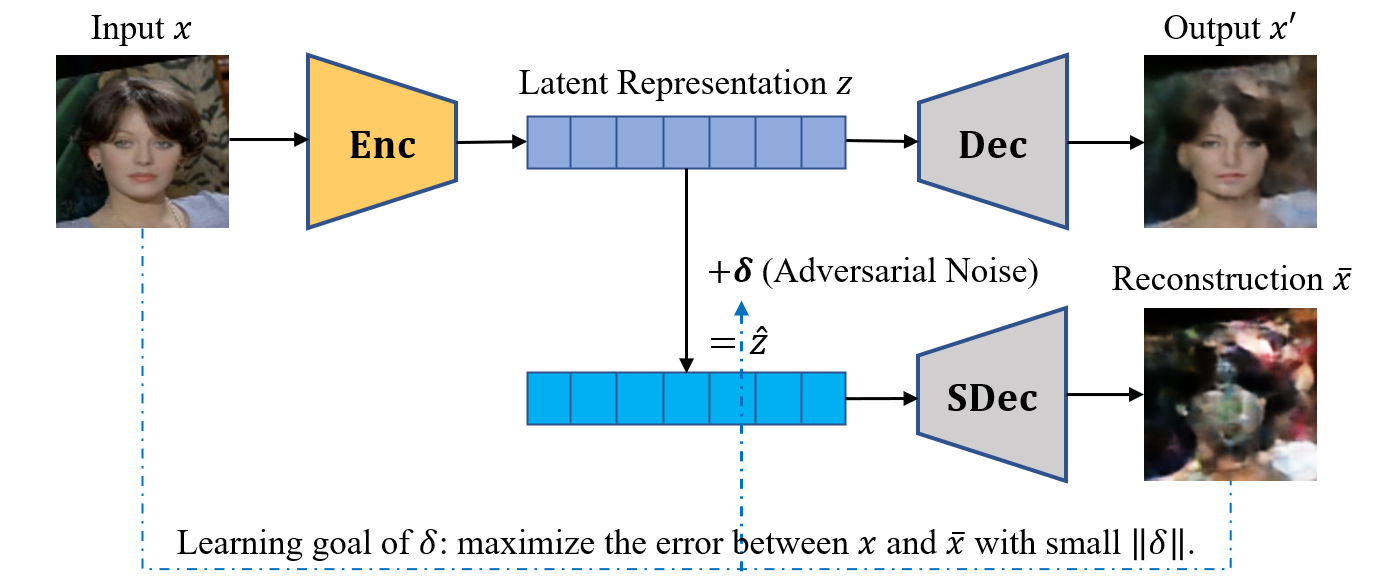}
		\caption{Adding adversarial noise on latent representations.}
		\label{figure:add_adversarial_noise}
	\end{minipage}
\end{figure*}

\section{Adversarial Noise Mechanism Against Inversion Attacks}
\label{sect:adversarial_noise}

This section discusses the privacy of ARS. We first define a threat model caused by model inversion attacks. Then we describe how adversarial noise protects data representations from the inversion attacks. Finally, we extend ARS to the scenarios of vertical data partitioning and attribute extraction attacks.

\subsection{Threat Model}
\label{sect:threat_model}
The main threat of ARS collaborative learning is model inversion attacks. Although private information of raw data is hidden into representations, it may still be recovered by reconstruction models, like $\mathrm{Dec}$ trained by initiator in the encoder publishing phase. For simplicity, we firstly focus on reconstruction attacks, and extend our method to overcome attribute extraction attacks in Section \ref{sect:extension}.

The threat model defines adversaries who act like curious participants and try to recover the original input $x$ from data representation $z=\mathrm{Enc}(x)$ for private information. Like other participants, adversaries can obtain data representations shared by users and have query access to the published encoder $\mathrm{Enc}$, but have no knowledge about the architecture and parameters of $\mathrm{Enc}$. Nonetheless, adversaries can still exploit query feedback of $\mathrm{Enc}$ to carry out black-box attacks and acquire substitute decoders of $\mathrm{Dec}$. For instance, if user $i$ with data $\{(X_i,Y_i)\}$ is an attacker, it can generate latent representations by querying $z_i^j = \mathrm{Enc}(x_i^j)$ for $N_i$ times. Then it can build a substitute decoder $\mathrm{SDec}: \mathcal{Z}\longrightarrow \mathcal{X}$ as Figure \ref{figure:SDec} shows. $\mathrm{SDec}$ can be optimized by minimizing the distance between original and recovered samples:
\begin{equation}
	\theta_{\mathrm{SDec}} = \mathop{\arg \min}_{\theta_{\mathrm{SDec}}} \frac{1}{N_i}\sum_{j=1}^{N_i} \mathcal{D} \left ( x_i^j, \mathrm{SDec}(\mathrm{Enc}(x_i^j)) \right ).
	\label{equ:equ2}
\end{equation}

This kind of attack can be regarded as \textit{chosen-plaintext attack} (CPA) from the cryptographic point of view. The reconstruction ability of model inversion attacks is strong when training samples of victims and attackers has similar distribution. In the following, we propose the adversarial noise mechanism to defend against the inversion attacks.

\subsection{Adding Adversarial Noise}
\label{sect:simple_adversarial_noise}

The strategy to defense against inversion attacks comes from a simple idea: \textit{adding an intentionally designed small noise on latent representations before sharing them}. On the one hand, the scale of the noise vector is so small that it would not reduce the utility of shared data representations. On the other hand, we hope adding noise on data representations can make data reconstructed by $\mathrm{SDec}$ different enough from the original inputs. Inspired by adversarial examples, we let users add adversarial noise to latent representations in set $Z$, and share the set of adversarial examples $\hat{Z}$ instead, as shown in Figure \ref{figure:add_adversarial_noise}.

According to some researches \cite{papernot2017practical}, adversarial examples have strong transferability. Empirically, if $\hat{Z}$ successfully fools the decoder from which it is generated, then it is likely to cause another decoder with similar decision boundary to recover samples that are very different from the original inputs. The properties of adversarial noise determine its ability to protect the privacy of shared data representations even if the scale of noise is small. 

The method to generate effective adversarial noise consists of two steps. A participant first trains a substitute decoder $\mathrm{SDec}$ locally by simulating reconstruction attacks; then it generates adversarial noise for $Z$ to make $\mathrm{SDec}$ invalid. Adversarial noise is generated through iterative fast gradient sign method (I-FGSM) \cite{goodfellow2014explaining}, which sets the direction of adversarial noise to the gradient of objective function $\mathcal{L}_R$ with respect to $z$. The adversarial latent representation $\hat{z}$ is calculated as:
\begin{equation}
	\begin{aligned} 
		& \hat{z}^{(t+1)} = \hat{z}^{(t)} + \alpha \cdot \mathrm{sign} \left(\nabla_z \mathcal{D} \left ( x, \mathrm{SDec}(\hat{z}^{(t)}) \right ) \right ), \\
		& t=1,\dots ,n, \quad \hat{z}^{(1)}=z, \quad \hat{z} = \hat{z}^{(n+1)}, 
	\end{aligned}
	\label{equ:equ3}
\end{equation}
where $\alpha$ is a hyper-parameter regulating the scale of noise in each iteration and $n$ is the iteration time. The adversarial noise on $z$ can be denoted as:
\begin{equation}
	\delta_{R}=\hat{z}-z.
\end{equation}

In consideration of data utility, the difference between $z$ and $\hat{z}$ should not be so great, otherwise $\hat{z}$ would lose most of the features of $x$. Therefore, given an encoded vector $z$, we must ensure that $|\delta_{R}| = |\hat{z}-z| \leq \epsilon$, where $\epsilon$ is a hyper-parameter to be chosen. Next we prove that adversarial noise in Equation (\ref{equ:equ3}) satisfies the privacy budget.

\begin{prop}
	Given $\epsilon \in \mathbb{R}^{*}$, $n \in \mathbb{N}^{*}$. Suppose $\alpha = \frac{\epsilon}{n}$, then adversarial data representation $\hat{z}$ defined by Equation (\ref{equ:equ3}) satisfies: $|\hat{z}-z| \leq \epsilon$.
	\label{prop:budget}
\end{prop}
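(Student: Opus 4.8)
The plan is to exploit the telescoping structure of the iterative update in Equation~(\ref{equ:equ3}), combined with the fact that every step displaces $\hat{z}$ by a vector whose entries all lie in $\{-1,0,1\}$ scaled by $\alpha$. Throughout I read the norm $|\cdot|$ as the $l_\infty$ norm, which is the natural choice for the I-FGSM construction, since the $\mathrm{sign}$ operator controls each coordinate uniformly. The whole argument is then a direct consequence of the triangle inequality; there is no genuinely hard step, and the only point meriting care is being explicit about the norm and observing that the bound is obtained per-coordinate.

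First I would express the total perturbation as a sum over the $n$ iterations. Since $\hat{z} = \hat{z}^{(n+1)}$ and $\hat{z}^{(1)} = z$, the successive differences telescope:
\begin{equation*}
	\hat{z} - z = \hat{z}^{(n+1)} - \hat{z}^{(1)} = \sum_{t=1}^{n} \left(\hat{z}^{(t+1)} - \hat{z}^{(t)}\right) = \alpha \sum_{t=1}^{n} \mathrm{sign}\left(\nabla_z \mathcal{D}\left(x, \mathrm{SDec}(\hat{z}^{(t)})\right)\right).
\end{equation*}
This reduces the claim to bounding a scaled sum of sign vectors, which is where the choice $\alpha = \epsilon/n$ will do its work.

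Next I would bound this coordinate-wise. For any fixed coordinate $m$, each term $\mathrm{sign}(\cdot)_m$ lies in $\{-1,0,1\}$ and hence has absolute value at most $1$. By the triangle inequality the $m$-th entry of $\hat{z}-z$ is bounded in absolute value by $\alpha \sum_{t=1}^{n} 1 = \alpha n$. Taking the maximum over all coordinates yields $|\hat{z}-z|_\infty \leq \alpha n$. Substituting the hypothesis $\alpha = \epsilon/n$ then gives $|\hat{z}-z|_\infty \leq (\epsilon/n)\cdot n = \epsilon$, which is exactly the claimed privacy budget.

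As for difficulty, I do not anticipate any real obstacle: the statement follows immediately from the boundedness of $\mathrm{sign}$ and the additivity of the iterative steps. It is worth remarking that the bound is tight — the worst case, in which a coordinate saturates to $\epsilon$, occurs precisely when the sign of that coordinate's gradient never flips across the $n$ iterations — so the estimate cannot be sharpened without further assumptions on the gradients of $\mathcal{D}$.
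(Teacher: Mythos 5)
Your proposal is correct and follows essentially the same argument as the paper: telescoping the $n$ iteration steps and applying the triangle inequality with each step bounded by $\alpha = \epsilon/n$ via the boundedness of $\mathrm{sign}$. Your version is slightly more careful than the paper's in making the $l_\infty$ interpretation of $|\cdot|$ explicit and in bounding each sign entry by $1$ rather than asserting the step size equals $\alpha$ exactly (which can fail when a gradient coordinate vanishes), but the underlying idea is identical.
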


\begin{proof}
	For any iteration step $t \in \{1,\dots ,n\}$, we have:
	\begin{equation*}
		\begin{aligned}
			\left |\hat{z}^{(t+1)}-\hat{z}^{(t)} \right | & = \left | \alpha \cdot \mathrm{sign} \left(\nabla_z \mathcal{D} \left ( x, \mathrm{SDec}(\hat{z}^{(t)}) \right ) \right ) \right | \\
			& = \alpha \left | \mathrm{sign} \left(\nabla_z \mathcal{D} \left ( x, \mathrm{SDec}(\hat{z}^{(t)}) \right ) \right ) \right | \\
			& = \alpha.
		\end{aligned}
	\end{equation*}
	Therefore, $|\delta_{R}| = |\hat{z}-z| = |\hat{z}^{(n+1)}-\hat{z}^{(1)}| \leq \sum_{t=1}^n |\hat{z}^{(t+1)}-\hat{z}^{(t)}| = n \cdot \alpha = \epsilon$.
\end{proof}

Proposition \ref{prop:budget} shows that if we set $\alpha$ to $\frac{\epsilon}{n}$ in Equation (\ref{equ:equ3}), then the scale of adversarial noise will be limited to $\epsilon$. Here $\epsilon$ is called \textit{defense intensity}, which determines the utility and privacy of representations.

\subsection{Adding Masked Adversarial Noise}

$\hat{Z}$ can mislead most of decoders trained from pairs $\{(Z, X)\}$ due to the transferability of adversarial examples. However, model inversion can still occur if attackers apply \textit{adversarial training} \cite{goodfellow2014explaining} (also called data enhancement) to build $\mathrm{SDec}': \mathcal{\hat{Z}}\longrightarrow \mathcal{X}$. Suppose an attacker (user $i$) aims to execute a reconstruction attack. In the data sharing phase, the attacker first trains $\mathrm{SDec}$ on its local data $X_i$ by optimizing Equation (\ref{equ:equ2}). Then it transforms $X_i$ into representations $Z_i$ and adding adversarial noise to them as Equation (\ref{equ:equ3}) shows. After that, the attacker can train $\mathrm{SDec}'$ on $\{(\hat{Z}_i, X_i)\}$ as:
\begin{equation}
	\theta_{\mathrm{SDec}'}=\mathop{\arg \min}_{\theta_{\mathrm{SDec}'}}\frac{1}{N_i}\sum_{j=1}^{N_i} \mathcal{D} \left ( x_i^j, \mathrm{SDec}'(\hat{z}_i^j) \right ).
	\label{equ:adv_attack}
\end{equation}

To defend against model inversion attack with data enhancement, we propose \textit{noise masking}, a simple and effective method to make participants perturb data representations in unique ways. Each user possess a mask vector (denoted by $\mathbf{m}$) with the same size as data representations. All dimensions of $\mathbf{m}$ are randomly initialized to either $0$ or $1$, in order to mask some dimensions of calculated gradients and thus allow users to perturb other dimensions of representations. The process of generating masked adversarial noise on representation $z$ is expressed as:
\begin{equation}
	\begin{aligned}
		& \hat{z}^{(t+1)} = \hat{z}^{(t)} + \frac{\epsilon}{n} \cdot \mathrm{sign} \left (\mathbf{m} \cdot \nabla_z \mathcal{D} \left ( x, \mathrm{SDec}(\hat{z}^{(t)}) \right ) \right ), \\
		& t=1,\dots ,n, \quad \hat{z}^{(1)}=z, \quad \hat{z} = \hat{z}^{(n+1)}. 
	\end{aligned}
	\label{equ:equ6}
\end{equation}

The value of a vector $\mathbf{m}$ is held by its owner secretly, which leads attackers to train substitute models on data representations that are perturbed in a significant different way from representations of target users with high probability. We discuss whether and when the masks are effective in Section \ref{sect:noise_masking}, and prove that mask vectors with sufficiently large dimension are difficult to crack through brute force. Therefore, ARS with noise masking can be considered safe enough. Algorithm \ref{alg:alg1} shows the overall process of generating data representations in ARS.

\begin{algorithm}[htbp]
	\caption{ARS Data Sharing Method for User $i$}
	\label{alg:alg1}
	\KwIn{training samples $X_i = \{x_i^1, x_i^2, \dots, x_i^{N_i}\}$}
	\KwOut{representations $\hat{Z}_i = $ $\{\hat{z}_i^1, \hat{z}_i^2, \dots, \hat{z}_i^{N_i}\}$}
	
	initialize $\mathrm{Enc}$, $\epsilon$, $n$, $\mathbf{m}_i$\;
	
	$Z_i=\{z_i^1, z_i^2, \dots, z_i^{N_i}\}$, where $z_i^j = \mathrm{Enc}(x_i^j)$\;
	
	update $\mathrm{SDec}$ via: $\theta_{\mathrm{SDec}} = \mathop{\arg \min}_{\theta_{\mathrm{SDec}}} \frac{1}{N_i}\sum_{j=1}^{N_i} \mathcal{D} \left (x_i^j , \mathrm{SDec}(z_i^j) \right )$\;
	
	\For{$j=1$ to $N_i$}{
		$\hat{z}_i^j = z_i^j$\;
		\For{$k=1$ to $n$}{
			$\hat{z}_i^j = \hat{z}_i^j + \frac{\epsilon}{n} \cdot \mathrm{sign} \left (\mathbf{m}_i \cdot \nabla_z \mathcal{D} \left ( x, \mathrm{SDec}(\hat{z}_i^j) \right ) \right )$\;
		}
	}
	return $\hat{Z}_i = $ $\{\hat{z}_i^1, \hat{z}_i^2, \dots, \hat{z}_i^{N_i}\}$\;
\end{algorithm}

\subsection{Extension}
\label{sect:extension}

\subsubsection{Attribute Exaction Attacks}
In addition to reconstruction attacks, user $i$ acting as an adversary can also train a classifier $F_k: \mathcal{Z} \longrightarrow \mathcal{A}_k$ on its local pairs $\{(Z_i,A_{i,k})\}$, as:
\begin{equation}
	\theta_{F_k} = \mathop{\arg \min}_{\theta_{F_k}} \frac{1}{N_i}\sum_{j=1}^{N_i} \mathcal{D} \left ( a_{ik}^j, F_k(z_i^j) \right ),
	\label{equ:equ2_2}
\end{equation}
where $k \in \{1, 2, \dots, M\}$ is the index of the target private attribute. After that, the attacker can conduct attribute extraction attacks by predicting the $k$-th private attribute from shared data representations.

The strategy to overcome the feature leakage is similar to the above method. To preserve the $k$-th private attribute, users should first train a classifier $F_k$ locally, and then craft adversarial noise on $Z$ to maximize the feature loss $\mathcal{L}_{A_k}$. Here we simply apply FGSM method as:
\begin{equation}
	\delta_{A_k} = -\epsilon \cdot \mathrm{sign} \left ( \nabla_z \mathcal{D}\left ( r_k, F_k(\mathrm{Enc}(x^j)) \right ) \right ),
	\label{equ:equ3_2}
\end{equation}
where $\delta_{k}$ is the adversarial noise to preserve the $k$-th private attribute. The purpose is to keep prediction results close to a certain vector $r_k$, regardless of inputs, thus lead the prediction to meaningless results.

Consequently, the overall adversarial noise of an arbitrary participant $i$ can be calculated as:
\begin{equation}
	\delta_{i} = \lambda_{0}\delta_{R} + \sum_{k=1}^{M}\lambda_{k}\delta_{A_k},
	\label{equ:equ4_2}
\end{equation}
where $\sum_{k=0}^{M}\lambda_{k} = 1$, so that $|\delta_{i}|\leq \epsilon$. Experimental results in Section \ref{sect:privacy_protection2} show that letting $\lambda_0=\frac{1}{2}$, $\lambda_k=\frac{1}{2M}$ $(k\geq 1)$ may be a good choice to ensure the defense against data leakage and feature leakage at the same time.

\subsubsection{Vertical Data Partitioning}
In the second extension, we generalize ARS to the vertical data partitioning scenario. For simplicity, we suppose that datasets are aligned on IDs, and each participant owns $N$ samples. In the training phase, user $i$ trains $\mathrm{Enc}_i$ on its local dataset $X_i=\{ x_i^1, x_i^2, \dots, x_i^N\}$, then calculate $z_i^j=\mathrm{Enc}_i(x_i^j)$ and share $\hat{z}_i^j = z_i^j+\delta_{i}^j$. One of the users share labels $Y=\{ y^1, y^2, \dots, y^N\}$. Different from horizontal data partition, a common encoder is unused, so $\mathrm{Enc}_i$ does not need to be shared.

To train downstream models collaboratively, data representations shared by each users are concatenated as $w^j = \mathrm{concat}(\hat{z}_1^j,$ $\hat{z}_2^j, \dots, \hat{z}_K^j)$. Afterwards, any user $i$ who has sufficient computational power is able to train model $C_i$ on $\{(W, Y)\}$ locally.

\section{Experiments}
\label{sect:experiments}
In this section, we evaluate ARS by simulating a multi-party collaborative learning scenario. We present the performance of ARS in privacy preserving, as well as compare it with other joint learning frameworks, and then study the effectiveness of our mechanism in protecting private attributes.

\subsection{Experiment Settings}
\subsubsection{Datasets}
The experiments are conducted on three datasets: MNIST \cite{mnist} and CelebA \cite{Liu2014DeepLF} for horizontal data partitioning, and Adult  \cite{Dua2019} from UCI Machine Learning Repository for vertical data partitioning. MNIST consists of 70000 handwritten digits, the size of each image is $28 \times 28$. CelebA is a face dataset with more than 200K images, each with 40 binary attributes. Each image is resized to $96 \times 96 \times 3$. Adult dataset is census information, and the given task is to determine whether a person makes over \$50,000 a year basing on 13 attributes including "age", "workclass", "education". In our experiments, the inputs are mapped into real vectors with 133 feature columns.

\subsubsection{Scenario}
We simulate horizontal data partitioning scenarios where the number of participants $K=5$ in MNIST, and $K=3$ in CelebA. Each participant is randomly assigned 10000 examples as a local dataset. When conducting experiments on MNIST, the common encoder $\mathrm{Enc}$ and each user's substitute decoder $\mathrm{SDec}$ are implemented by three-layer ReLU-based fully connected neural networks. On CelebA, $\mathrm{Enc}$ and $\mathrm{SDec}$ are implemented by four-layer convolutional neural networks. In the data sharing phase, the iteration time $n$ is set to 10. We suppose that attackers execute adversarial training attacks (data enhancement) for two types of objectives: to recover original samples (see Section \ref{sect:privacy_protection}), or to extract private attributes (see Section \ref{sect:privacy_protection2}). In the collaborative learning phase, the tasks are set as training classifiers on shared data representations $\{\hat{Z}_i\}_{i=1}^K$. The labels are 10-dimensional one-hot codes in MNIST, and 2-dimensional vectors corresponding to binary attributes in CelebA.

We also present a vertical data partitioning scenario on Adult dataset in Section \ref{sect:vertical_experiment}. Each user holds $20000$ aligned samples, while the number of column vectors owned by different users is kept as close as possible. The encoders and substitute decoders are implemented by four-layer ReLU-based fully connected neural networks. An adversary owning $5000$ partial inputs would like to attack the user who possess the same feature columns of samples.

All above tasks can be regarded as case studies of collaborative learning in the real world. For example, companies can share privacy representations of photos to train face recognition models; enterprises may cooperate with each other to draw more comprehensive customer personas.

\subsubsection{Privacy Metrics}
We consider three metrics to measure $\mathcal{D}(x,\bar{x})$, which reflects the privacy leakage caused by reconstruction attacks. According to a common practice that represents $\mathcal{D}(x,\bar{x})$ as $l_p$ norm, we define $\mathcal{D}(x,\bar{x})$ as $\| x-\bar{x} \|_2$. This metric is also well known as MSE (Mean Square Error). PSNR (Peak Signal-to-Noise Ratio) and SSIM (Structural Similarity Index) \cite{hore2010image} reflects pixel level difference between original and reconstructed images, and are highly consistent with human perceptual capability, so they are also applicable for evaluating privacy leakage. The metrics are widely adopted in related researches, which allows us to compare their results with ours directly. To measure the privacy of private attributes, we simulate attacks on these attributes and record the proportion that the predictions are equal to the fixed vector.

\begin{figure}[htbp]
	\centering
	\subfloat[MNIST, Accuray]{
		\includegraphics[width=0.45\linewidth]{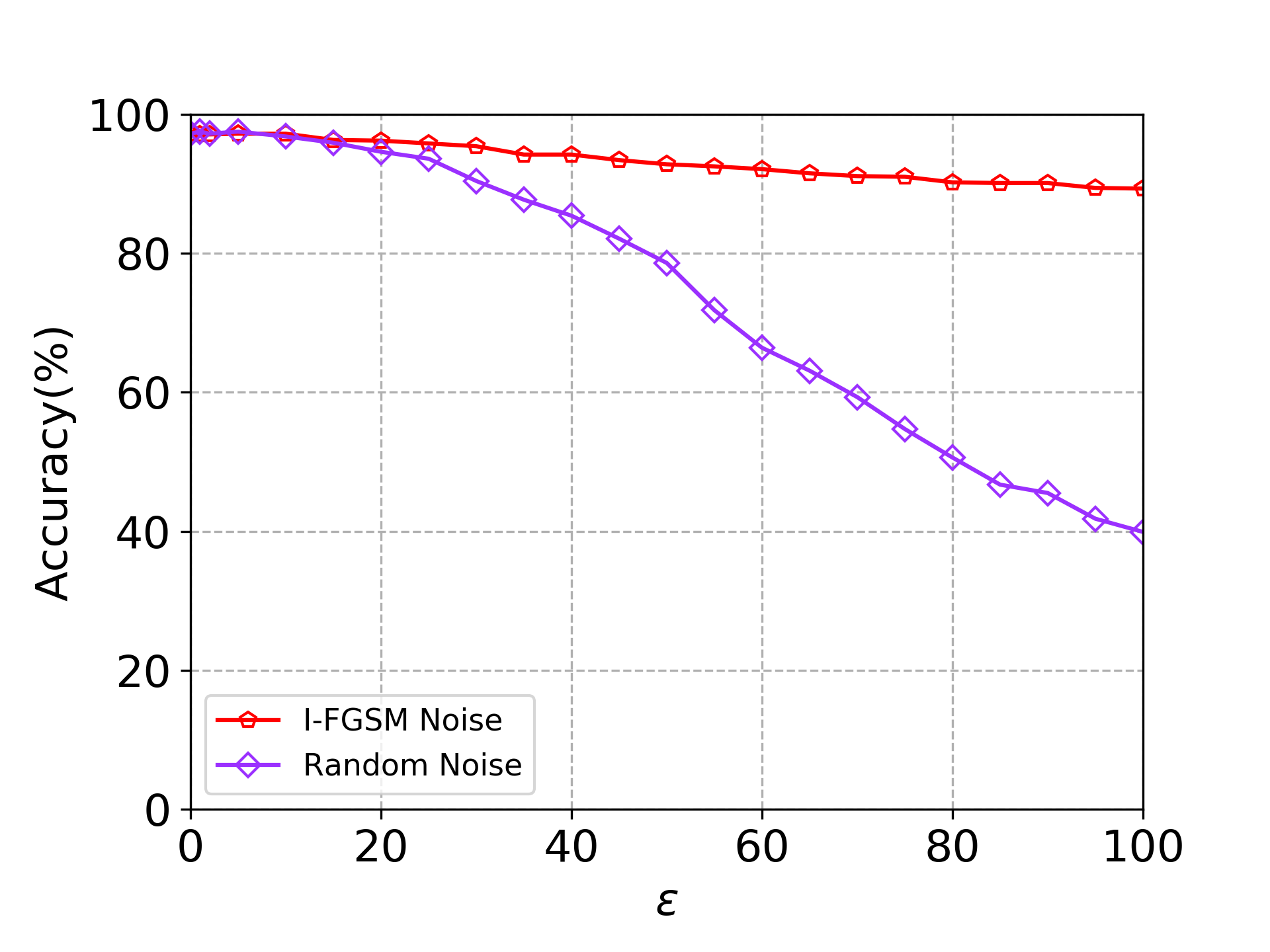}
	}%
	\subfloat[MNIST, MSE]{
		\includegraphics[width=0.45\linewidth]{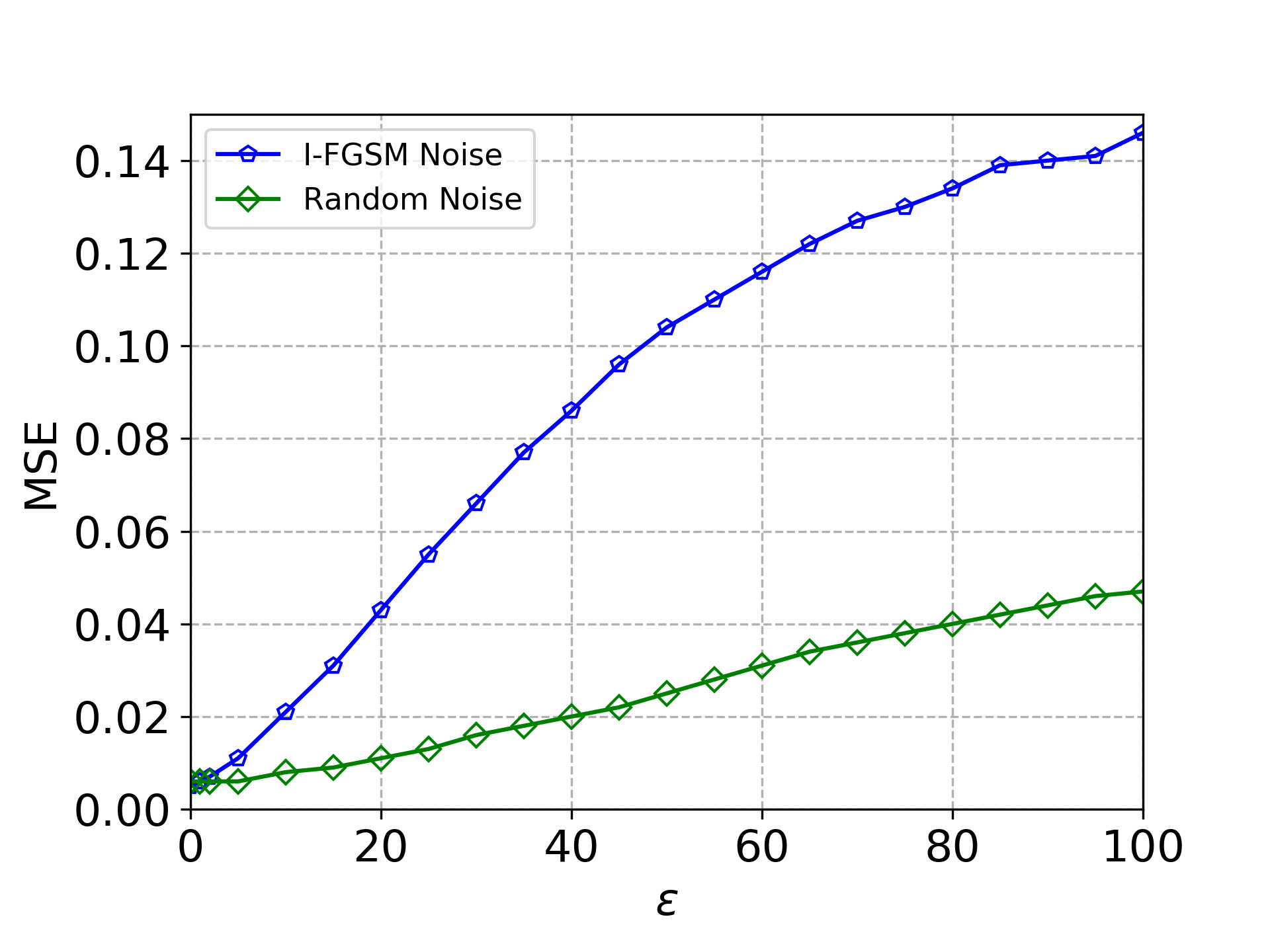}
	}%
	
	\subfloat[CelebA, Accuray]{
		\includegraphics[width=0.45\linewidth]{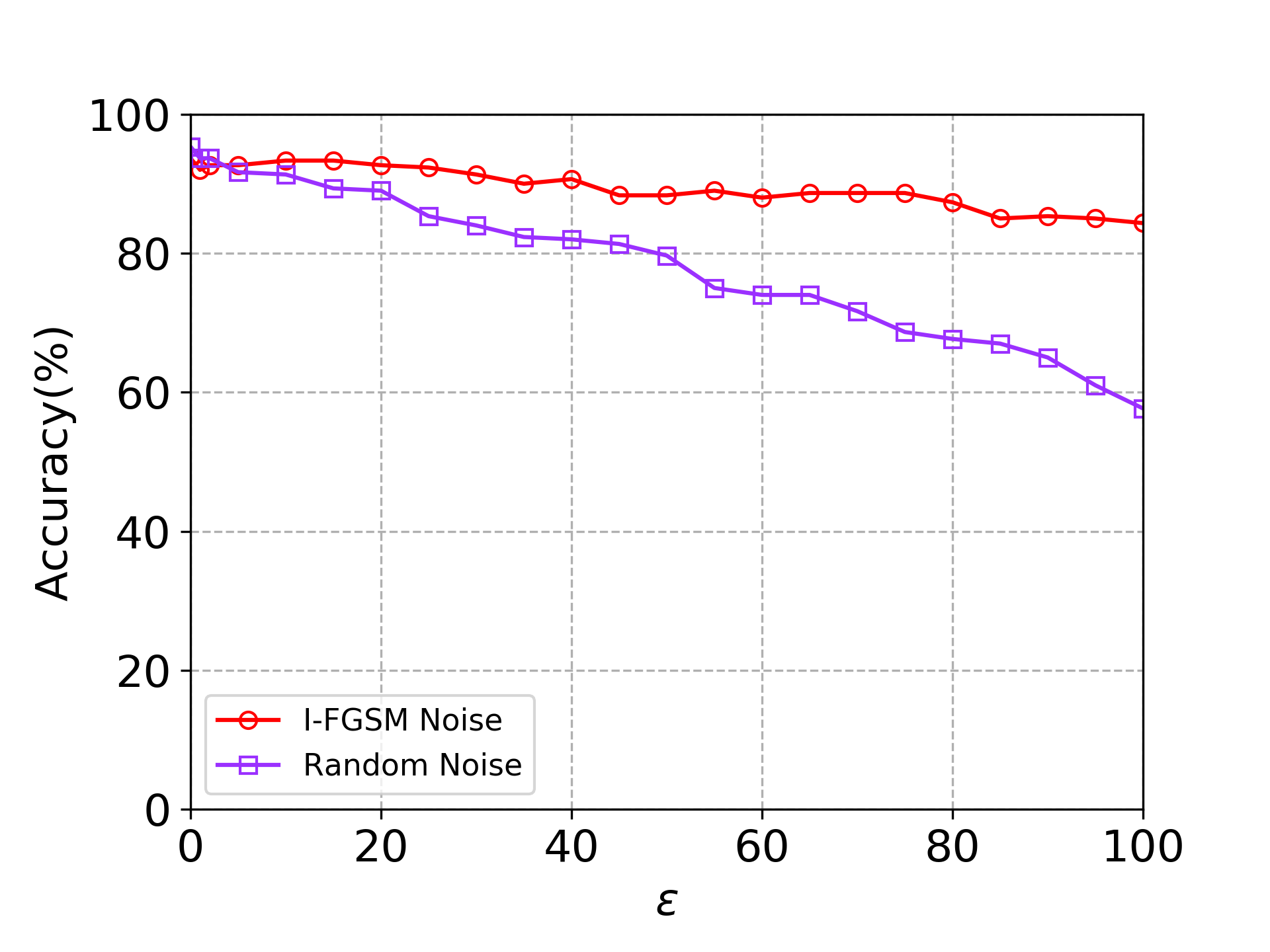}
	}%
	\subfloat[CelebA, MSE]{
		\includegraphics[width=0.45\linewidth]{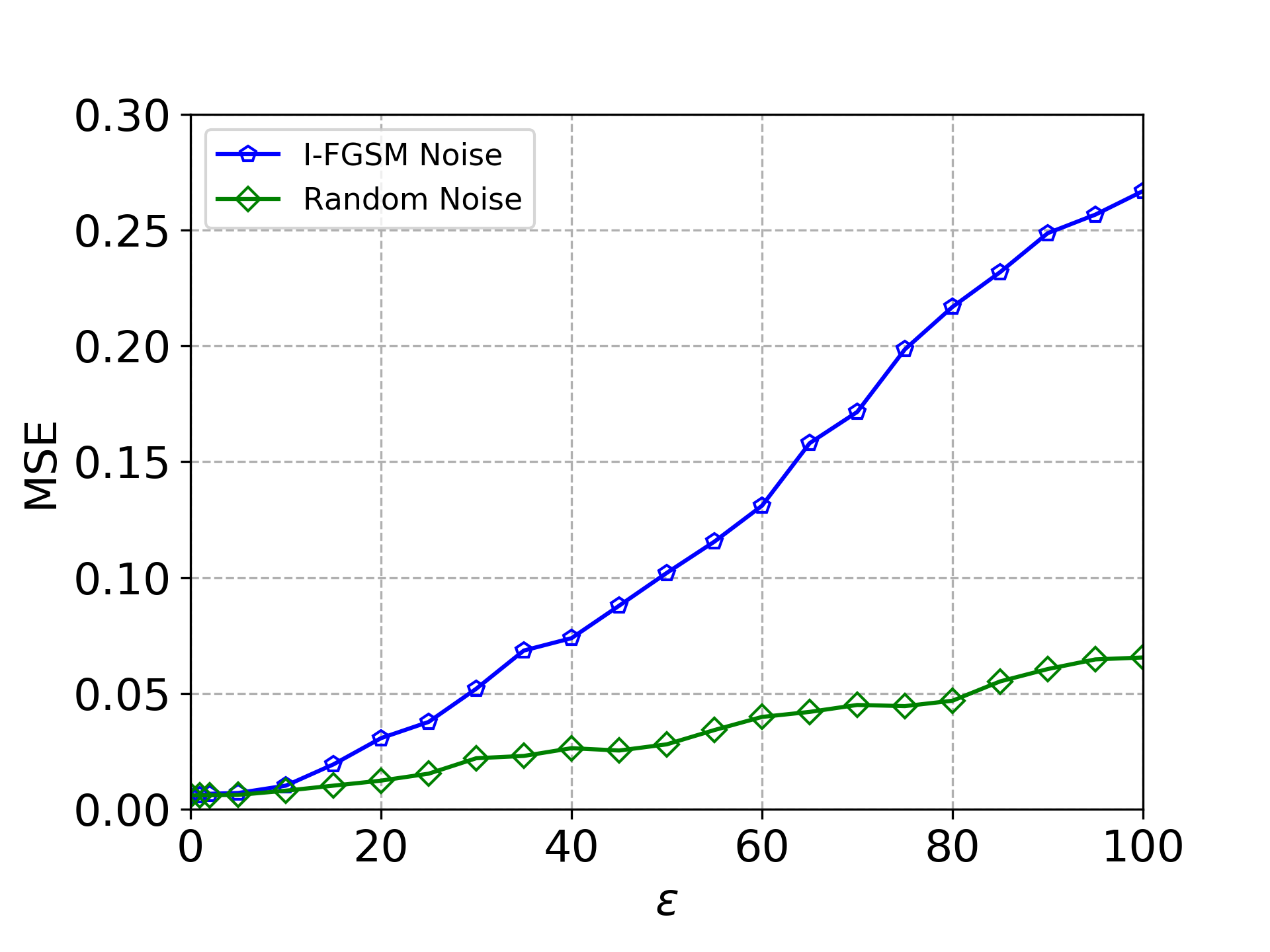}
	}%
	\centering
	\caption{Classification accuracy and reconstruction loss versus $\epsilon$.}
	\label{figure:experiment1}
\end{figure}

\begin{figure}[tbp]
	\centering
	\subfloat[]{
		\includegraphics[width=0.3\linewidth]{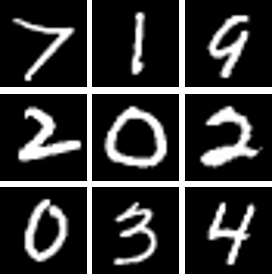}
	}%
	\subfloat[]{
		\includegraphics[width=0.3\linewidth]{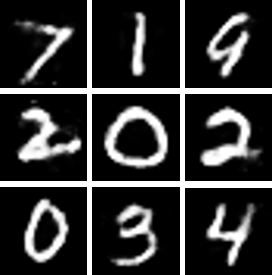}
	}%
	\subfloat[]{
		\includegraphics[width=0.3\linewidth]{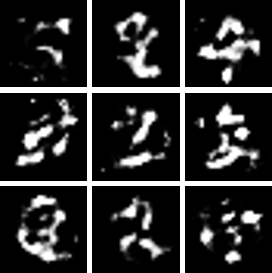}
	}%
	\centering
	\caption{Digit images and corresponding reconstructed images. (a) Original input images. (b) Reconstructed images corresponding to data representations without noise. (c) Reconstructed images corresponding to representations with adversarial noise ($\epsilon=50$).}
	\label{figure:experiment2_1}
\end{figure}

\begin{figure*}[htbp]
	\centering
	\subfloat[Original Images]{
		\includegraphics[width=0.125\linewidth]{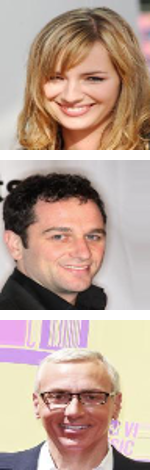}
	}\quad
	\subfloat[$\epsilon=0$]{
		\includegraphics[width=0.125\linewidth]{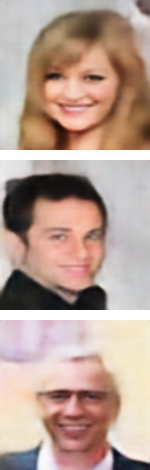}
	}\quad
	\subfloat[$\epsilon=25$]{
		\includegraphics[width=0.125\linewidth]{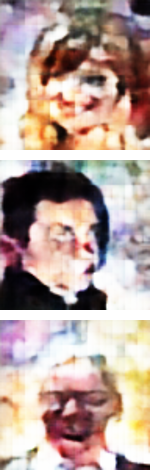}
	}\quad
	\subfloat[$\epsilon=50$, no mask]{
		\includegraphics[width=0.125\linewidth]{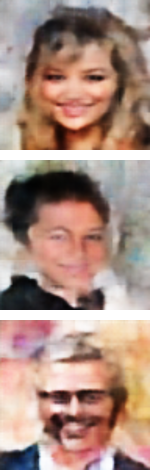}
	}\quad
	\subfloat[$\epsilon=50$]{
		\includegraphics[width=0.125\linewidth]{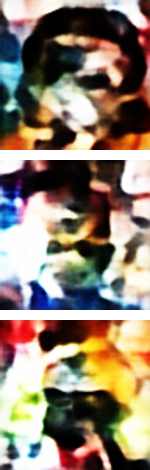}
	}\quad
	\subfloat[$\epsilon=100$]{
		\includegraphics[width=0.125\linewidth]{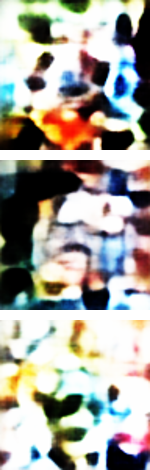}
	}
	\caption{Face images and corresponding reconstructed images. Images in column (a) are raw data. Column (b), (c), (e), (f) corresponds to adversarial latent representations with different $\epsilon$. Column (d) corresponds to adversarial representations ($\epsilon = 50$) without noise masking.}
	\label{figure:experiment2_2}
\end{figure*}

\subsection{Protecting Privacy Against Reconstruction Attacks}
\label{sect:privacy_protection}

\subsubsection{Defense Intensity} We firstly evaluate the utility and privacy of ARS with different defense intensity $\epsilon$. Figure \ref{figure:experiment1} reports the accuracy of the classifier trained on shared data representations to evaluate the utility, and MSE of reconstructed images to evaluate privacy. Experiments are conducted on both MNIST and CelebA datasets. On CelebA, the task is to predict the attribute "Male". We set up another method as a baseline, which generates random noise with uniform distribution instead of adversarial noise. Observe that with the increase of $\epsilon$, the reconstruction loss becomes higher, which indicates that adversarial noise with a larger scale makes it more difficult to filch private information from shared representations. When measuring the utility of shard data, we find that when $\epsilon$ becomes larger, the accuracy decreases slightly from 97.2\% to 89.3\% in MNIST, and from 93.7\% to 84.3\% in CelebA. The variety of MSE and accuracy with the scale of noise illustrates the trade-off between utility and privacy of shared data.

\subsubsection{Visualization of Reconstructed Images}

We then explore the effectiveness of adversarial noise defense by displaying the images under reconstruction attacks. Figure \ref{figure:experiment2_1} compares digit images recovered from adversarial latent representations with the undefended version, and illustrates that $\epsilon=50$ can well ensure data privacy. This preliminarily proves the privacy of the adversarial noise mechanism.

We further study the influence of different $\epsilon$ in CelebA. Figure \ref{figure:experiment2_2} shows the reconstructed images corresponding to data representations adding several kinds of noise. If adversarial noise is not used, the reconstructed image restores almost all private information of faces. When $\epsilon$ is set to 50, the recovered faces lose most of the features used to determine identity. When $\epsilon=100$, the reconstructed images become completely unrecognizable. For further discussion, we present the result when $\epsilon=50$, while the adversarial noise is not masked. As shown in the column (d) of Figure \ref{figure:experiment2_2}, the faces do get blurry, but some features with private information are still retained.

The experiments present satisfactory performance of adversarial noise on latent representations. If defense intensity is set to a sufficiently small value, the shared data can maintain high utility and privacy. In a real application, data utility is expected to be as higher as possible while privacy is well preserved. We choose $\epsilon=50$ as a suitable defense intensity in both datasets in the following experiments, because of its high privacy and acceptable classification accuracy, which is 92.8\% in MNIST, and 88.3\% in CelebA.

\renewcommand\arraystretch{1.3}
\begin{table*}[htbp]
	\centering
	\caption{Comparison of ARS and mainstream methods on important metrics.}
	\begin{tabular}{cccc}
		\toprule
		Framework	& Basic Method	& Reported Accuracy on MNIST	& Communication Cost per Server	\\ \midrule
		SecureML	& MPC	& 93.4\%	& $\succeq O(N \cdot K \cdot d + (B+d)\cdot t)$ \\
		DPFL	& Federated Learning	 & $\leq 92\%$ ($K \leq 1000$)	& $O(K \cdot W \cdot t)$ \\
		ARS	& Representation Learning	& 92.8\% ($\epsilon=50$)	& \bm{$O(N \cdot K \cdot h)$} \\
		\bottomrule
	\end{tabular}
	\label{tab:utility_comparison}
\end{table*}

\subsubsection{Comparison with Existing Mechanisms}
\label{sect:comparison}

We first evaluate data utility by comparing ARS with mainstream joint learning frameworks on MNIST. Table \ref{tab:utility_comparison} summarizes the performance of these methods in terms of important metrics such as classification accuracy and communication cost. SecureML \cite{mohassel2017secureml} is a two-party computation (2PC) collaborative learning protocol based on secret sharing and garbled circuits. Differential private federated learning \cite{geyer2017differentially} (which we call DPFL) approximates the aggregation results with a randomized mechanism to protect datasets against differential attacks. As shown in the table, the classification accuracy of ARS reaches a level similar to that of MPC and FL based methods, when we regard $\epsilon=50$ as a compromise between utility and privacy. Note that ARS is designed for general scenarios rather than specific datasets or tasks, and better results can be achieved by using more complex models or fine tuning hyper-parameters. Moreover, ARS has a low communication cost. We denote by $N$ the average number of training samples owned by each party, by $d$ the dimension of the original data, by $B$ the batch size, by $t$ the number of epochs to train models, by $W$ the number of parameters in a model, and by $h$ the dimension of latent representations. Since the iteration times to train a deep learning model using stochastic gradient descent (SGD) method depends on the number of training samples, otherwise the model will be underfitted, the complexity of $E$ should not be lower than $N$, it is easy to prove that ARS has the lowest communication complexity among the three methods.

We next evaluate the privacy of ARS by simulating reconstruction attacks. Since PSNR and SSIM are widely adopted by the latest researches in this field, we also calculate these two metrics as privacy leakage, and compare ARS with two state-of-the-art data sharing mechanisms: generative adversarial training based sharing mechanism \cite{xiao2019adversarial} and SCA based sharing mechanism \cite{Ferdowsi2020PrivacyPreservingIS}. Similar to ARS, both of them learn representations of data. The experiment is conducted on CelebA. Table \ref{tab:experiment_comparison} reports the privacy leakage of the mechanisms. As we can see, ARS performs better than the other two frameworks in PSNR, even when $\epsilon=50$. When $\epsilon$ increases to 100, SSIM of ARS also reaches the best result of the three mechanisms.

\begin{table}[htbp]
	\centering
	\caption{Privacy of different representation based mechanisms.}
	\begin{tabular}{cccccc}
		\toprule
		& \tabincell{c}{ARS\\($\epsilon = 50$)}  & \tabincell{c}{ARS\\($\epsilon = 100$)} & \tabincell{c}{Baseline\\($\epsilon = 50$)}	& \cite{xiao2019adversarial} & \cite{Ferdowsi2020PrivacyPreservingIS} \\ \midrule
		PSNR & \textbf{9.932}                & \textbf{5.748}                  & 15.527                 & 15.445               & 12.31             \\
		SSIM & 0.531                 & \textbf{0.101}                  & 0.728                  & 0.300                & 0.25             \\
		\bottomrule
	\end{tabular}
	\label{tab:experiment_comparison}
\end{table}

\subsubsection{Vertical Data Partitioning}
\label{sect:vertical_experiment}

We next evaluate the performance of ARS on Adult dataset under the vertical data partitioning scenario. In Table \ref{tab:experiment_vertical}, we divide the totally 133 feature columns of samples as evenly as possible among three users, then present the prediction accuracy (Acc), $F_1$-score, and robustness to reconstruction attacks of the collaborative learning system with different numbers of participants. To confirm the effectiveness of the inversion attack, we show the accuracy of adversarial training based reconstruction from data representations without adversarial noise (Adv. Tr.). Here we suppose that encoders used to generate data representations can be obtained by adversaries. To estimate the privacy of ARS, we present the reconstruction accuracy on concatenated adversarial latent representations (Rec. Acc). We observe that large number of collaborating users leads to higher prediction accuracy and $F_1$-score of the downstream classification model, and when the noise scales up, the reconstruction accuracy drops to less than $50\%$. So we demonstrate the great utility and privacy of ARS under the vertical data partitioning scenario.

\begin{table*}[htbp]
	\centering
	\caption{Results with different numbers of data owners on the vertical data partitioning scenario.}
	\begin{tabular}{c|cccc|cccc|cccc}
		\hline
		\multirow{2}{*}{$\epsilon$} & \multicolumn{4}{c|}{K=1}      & \multicolumn{4}{c|}{K=2}      & \multicolumn{4}{c}{K=3}       \\ \cline{2-13} 
		& Acc  & $F_1$-score   & Adv. Tr. & Rec. Acc   & Acc  & $F_1$-score   & Adv. Tr. & Rec. Acc   & Acc  & $F_1$-score   & Adv. Tr. & Rec. Acc   \\ \hline
		0                    & 78.9\% & 87.2\% & 97.4\%     & 97.3\% & 82.2\% & 88.7\% & 97.6\%     & 97.3\% & 84.6\% & 89.9\% & 97.7\%     & 97.5\% \\
		10                   & 78.9\%   & 87.1\% & 97.2\%     & 77.5\% & 81.6\% & 88.1\% & 97.4\%     & 82.8\% & 83.8\% & 89.3\% & 97.2\%     & 84.7\% \\
		25                   & 78.5\%   & 86.8\% & 96.7\%     & 45.4\% & 81.2\% & 87.7\% & 93.4\%     & 61.3\% & 83.6\% & 89.5\% & 96.7\%     & 55.9\% \\
		50                   & 78.5\% & 86.9\% & 92.8\%     & 32.6\% & 81.5\% & 88.2\%   & 91.7\%     & 46.2\% & 83.5\% & 89.6\% & 93.6\%     & 36.1\% \\
		\hline
	\end{tabular}
	\label{tab:experiment_vertical}
\end{table*}

\subsection{Preserving Privacy of Attributes}
\label{sect:privacy_protection2}

In this section, we evaluate ARS on a stronger assumption that users have some private attributes to protect. We assess the effectiveness of defense against attribute extraction attacks by how close the extracted features are to a fixed vector given by users. The experiments are conducted on CelebA. For all participants, we set predicting attribute "High Cheekbones" as the collaborative learning task, while selecting "Male" and "Smiling" as private attributes. Then we let each user train attribute extraction network $F_k$ corresponding to the $k$-th private attribute, which is similar to the adv-training decoder attack. We choose some typical values of $\lambda$ to generate adversarial noise, and set the fixed vector $r=(1,0)$ since the outputs of classifiers are two-dimensional vectors. Then we record the proportion that the predicted private attribute $F_k(\hat{z})$ is equal to $r$ to estimate the ability of ARS to mislead attribute extraction models.

\begin{figure}
	\centering
	\subfloat[$\lambda_D=1$, $\lambda_1=\lambda_2=0$]{
		\includegraphics[width=0.45\linewidth]{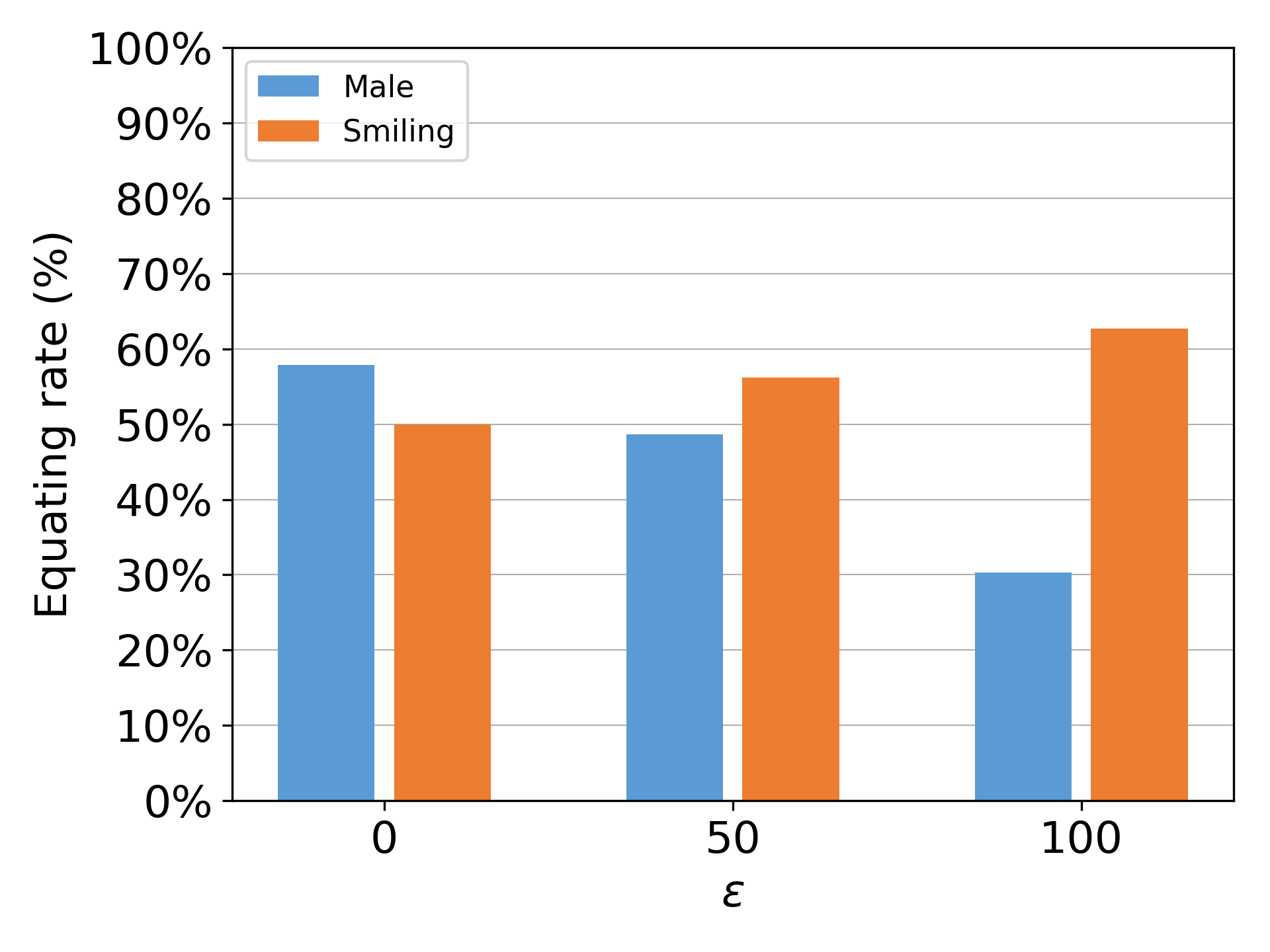}
	}
	\subfloat[$\lambda_D=0.5$, $\lambda_1=\lambda_2=0.25$]{
		\includegraphics[width=0.45\linewidth]{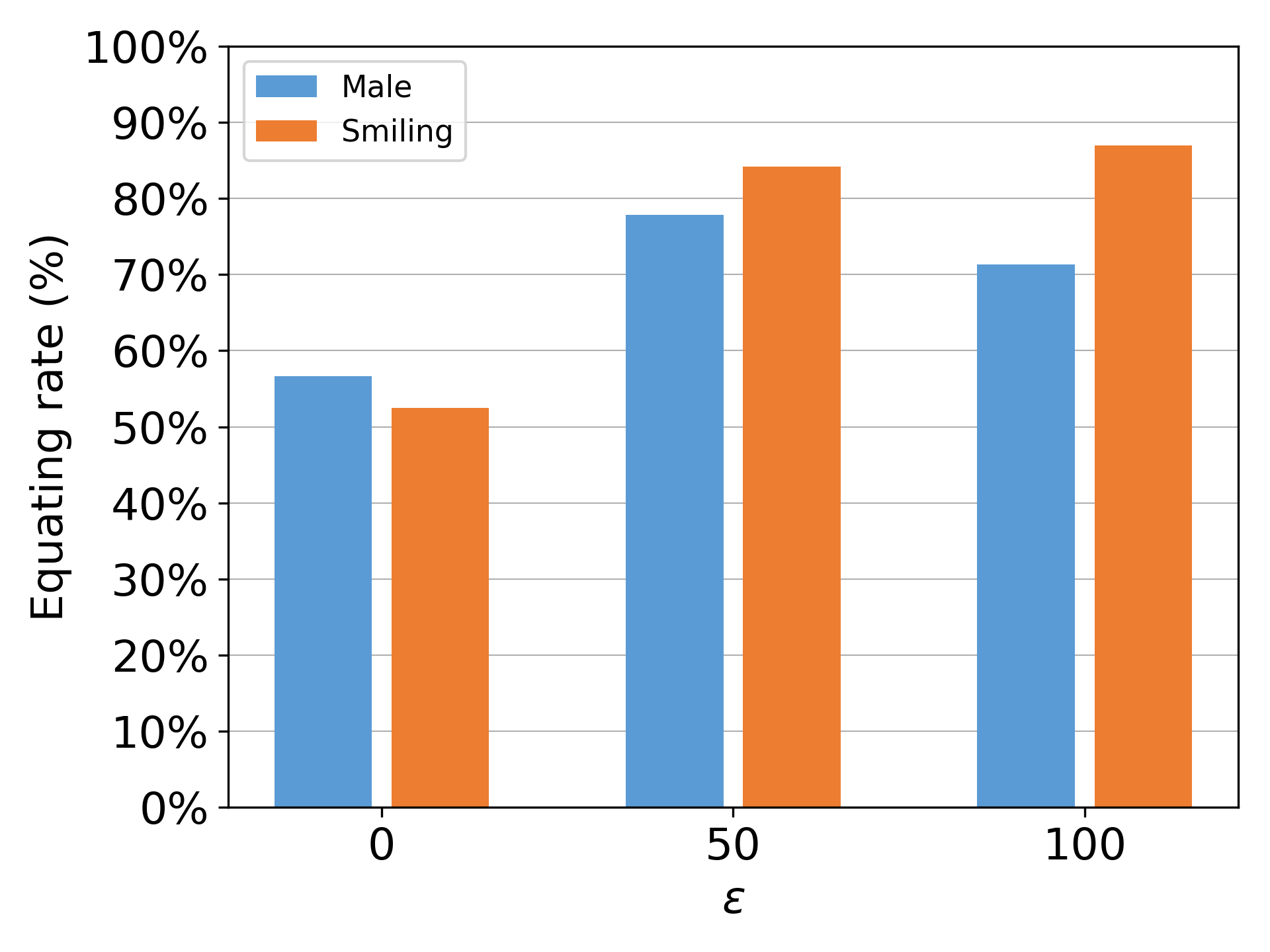}
	}
	
	\subfloat[$\lambda_D=\lambda_1=\lambda_2=0.33$]{
		\includegraphics[width=0.45\linewidth]{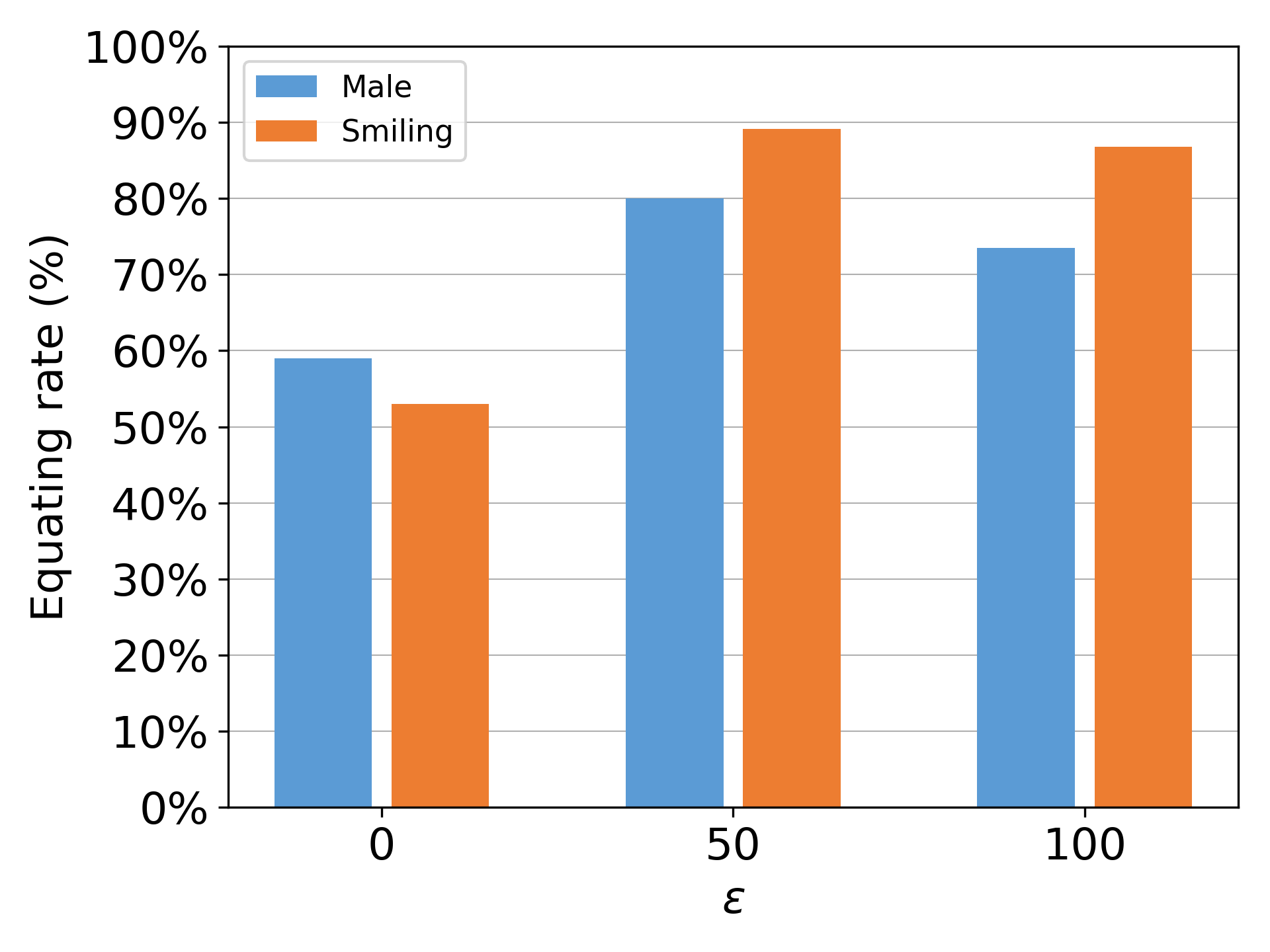}
	}
	\subfloat[$\lambda_D=0$, $\lambda_1=\lambda_2=0.5$]{
		\includegraphics[width=0.45\linewidth]{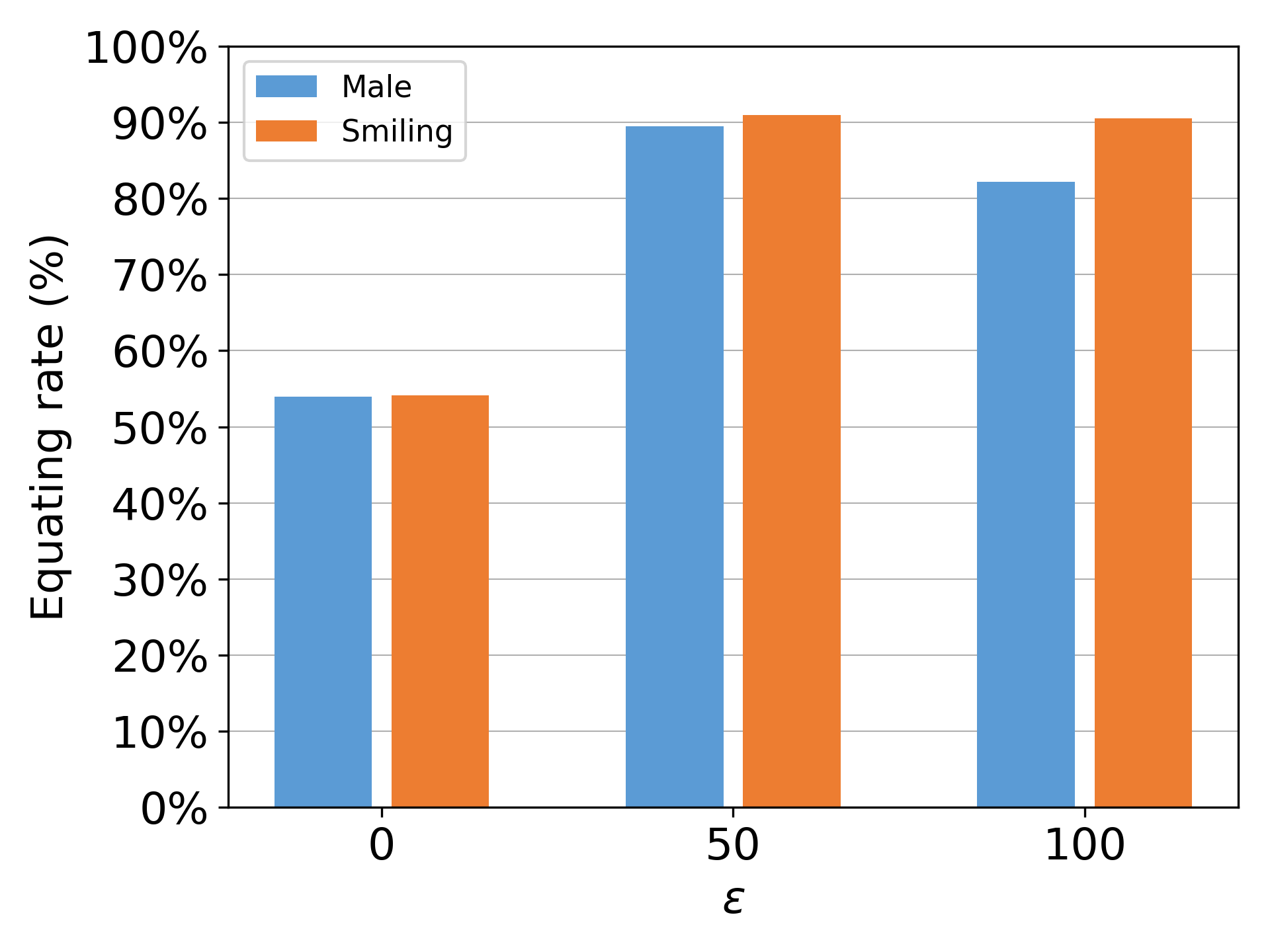}
	}
	
	\subfloat[$\lambda_D=0$, $\lambda_1=1$, $\lambda_2=0$]{
		\includegraphics[width=0.45\linewidth]{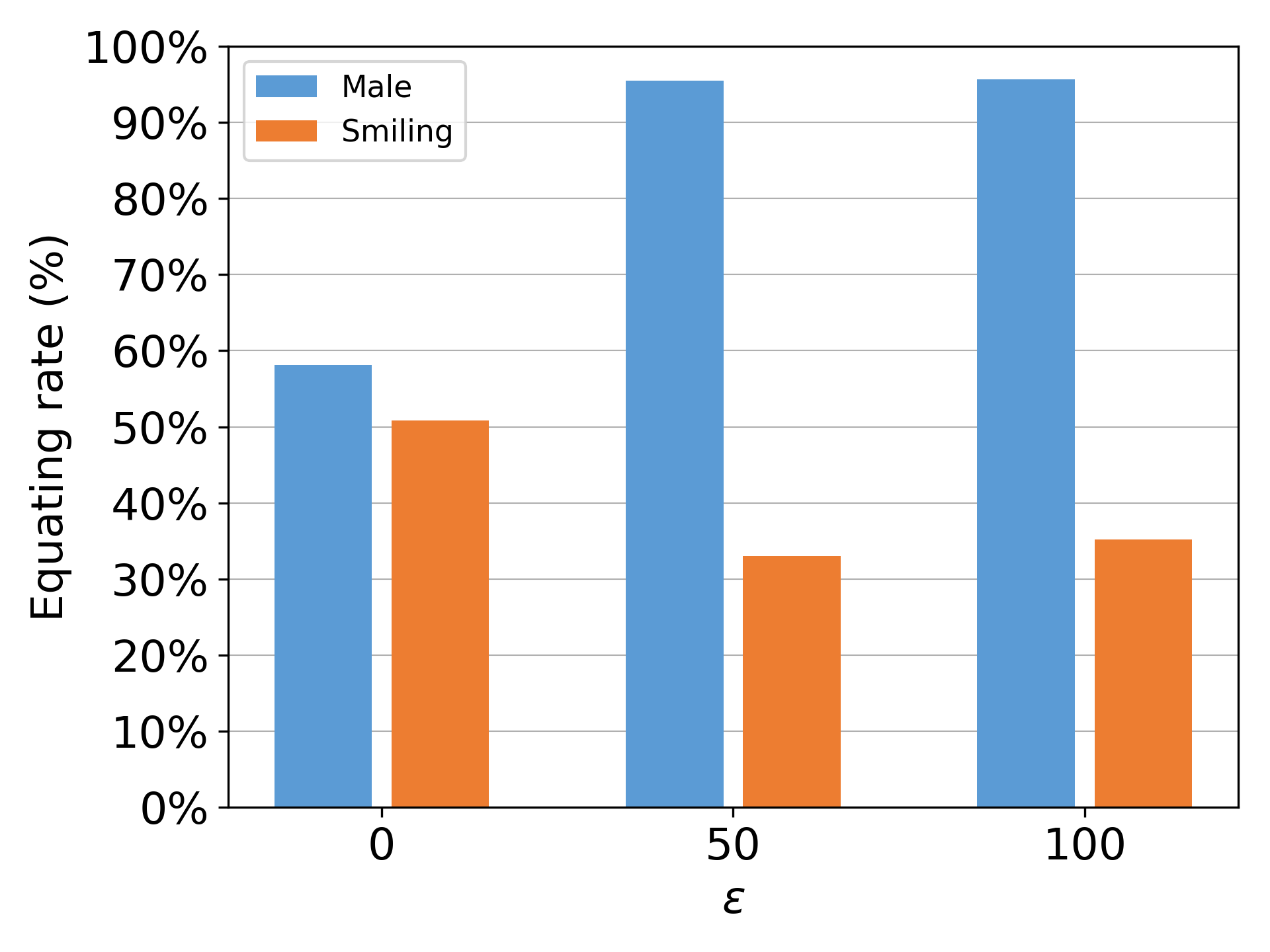}
	}
	\subfloat[$\lambda_D=\lambda_1=0.5$, $\lambda_2=0$]{
		\includegraphics[width=0.45\linewidth]{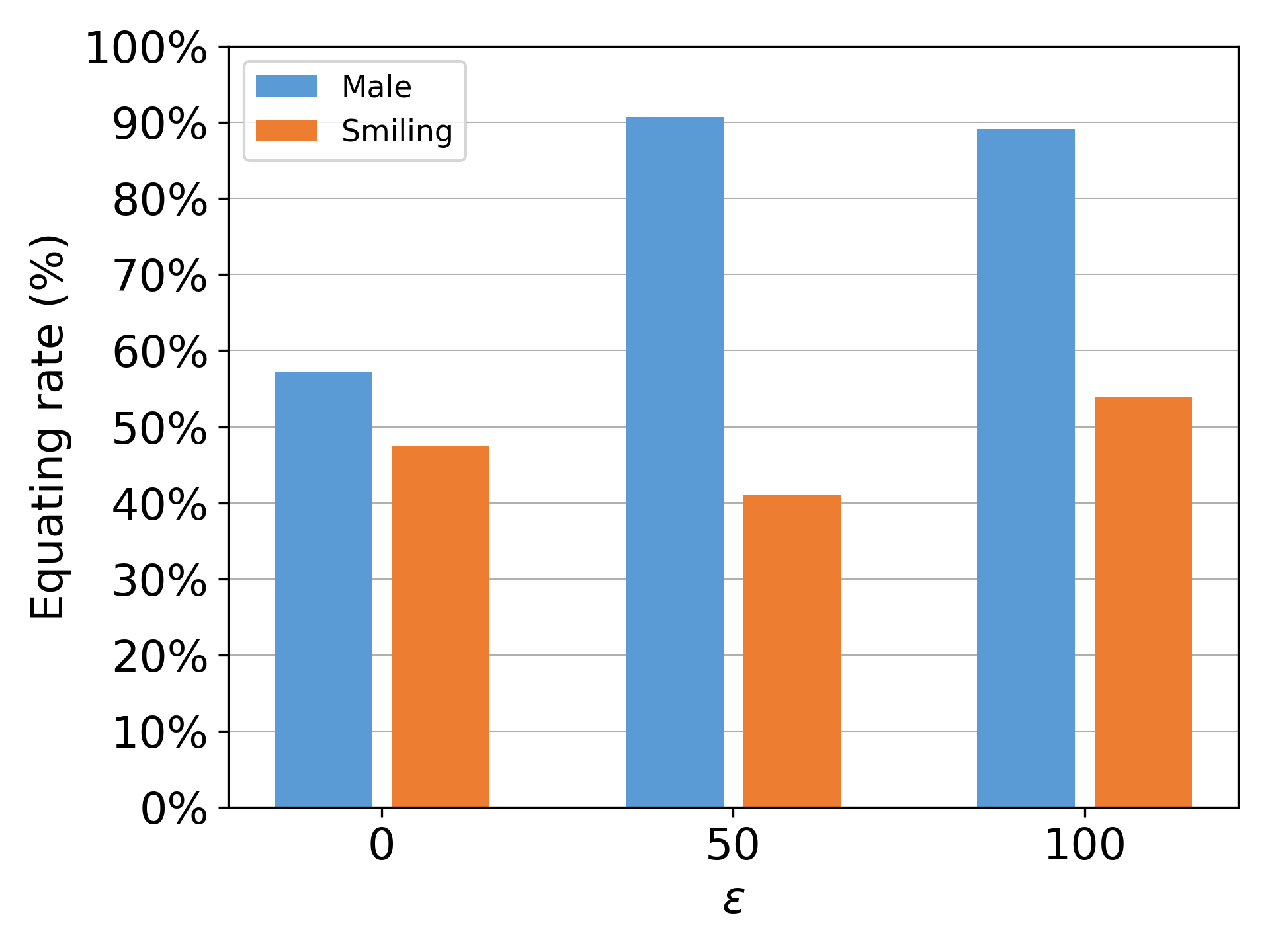}
	}	
	\caption{Proportion that the predictions of private attributes are equal to a given fixed vector.}
	\label{figure:exp_attributes_1}
\end{figure}

We analyze the effect of different compositions of adversarial noise by changing $\lambda$ as illustrated in Figure \ref{figure:exp_attributes_1}. (a-d) show that with the increase of $\lambda_k$, the probability that predictions of the $k$-th attribute are equal to $r$ becomes higher. Note that sometimes the equating rate gets lower when $\epsilon$ increases to $100$, this may be caused by influence of the other components of adversarial noise. (e-f) demonstrate that the weight of a noise $\delta_k$ has a great effect on the privacy of the $k$-th attribute. Our further experiment shows that the accuracy of attack classifiers can be close to $50\%$ when $\epsilon=50$.

We further show that the accuracy of classifiers for preserved attributes is close to $50\%$. Figure \ref{figure:exp_attributes_2} presents the accuracy of classifiers for three attributes, two of which are private. With the increase of $\lambda_k$, the accuracy corresponding to the $k$-th attribute approaches $50\%$ if $\lambda_k>0$. In addition, when $\epsilon$ is set to $100$, the classification accuracy of the second private attribute "Smiling" has a significant drop even if $\lambda_2=0$, this again proves the trade-off between utility and privacy of shared data.In summary, ARS is shown to be effective against attribute extraction attacks with acceptable privacy budget $\epsilon$.

\begin{figure}[htbp]
	\centering
	\subfloat[$\lambda_D=1$, $\lambda_1=\lambda_2=0$]{
		\includegraphics[width=0.45\linewidth]{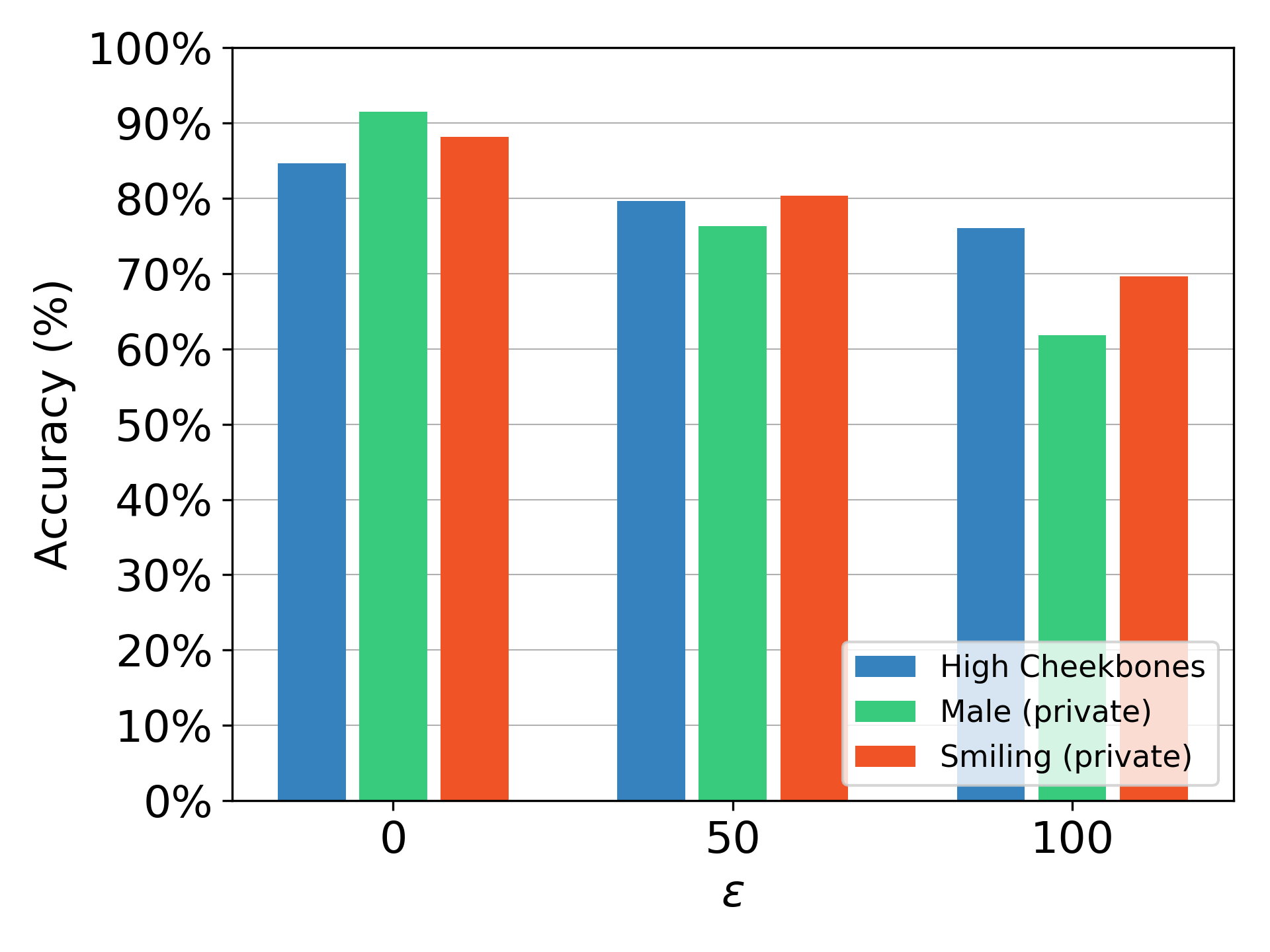}
	}%
	\subfloat[$\lambda_D=0.5$, $\lambda_1=\lambda_2=0.25$]{
		\includegraphics[width=0.45\linewidth]{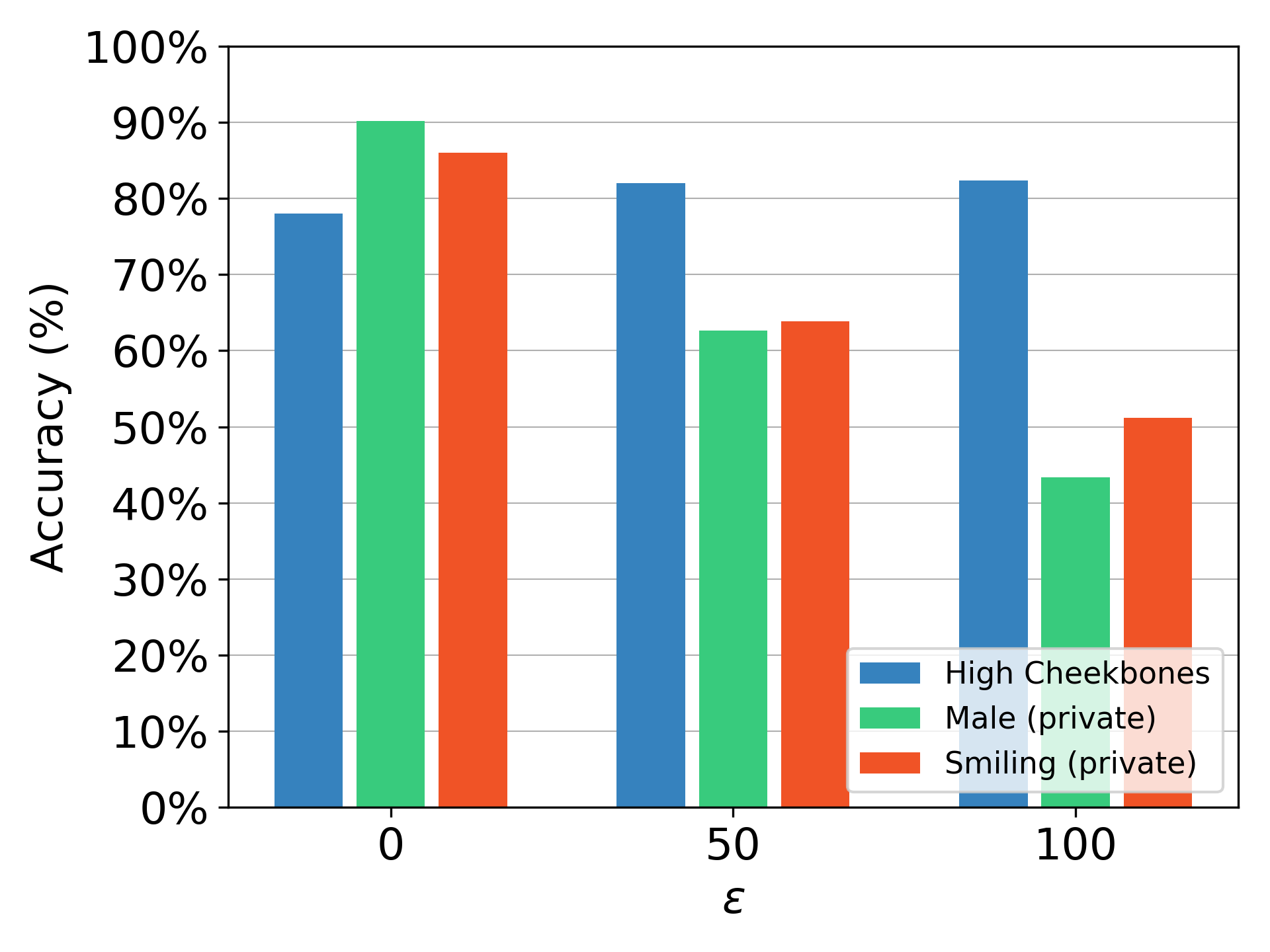}
	}%
	
	\subfloat[$\lambda_D=\lambda_1=\lambda_2=0.33$]{
		\includegraphics[width=0.45\linewidth]{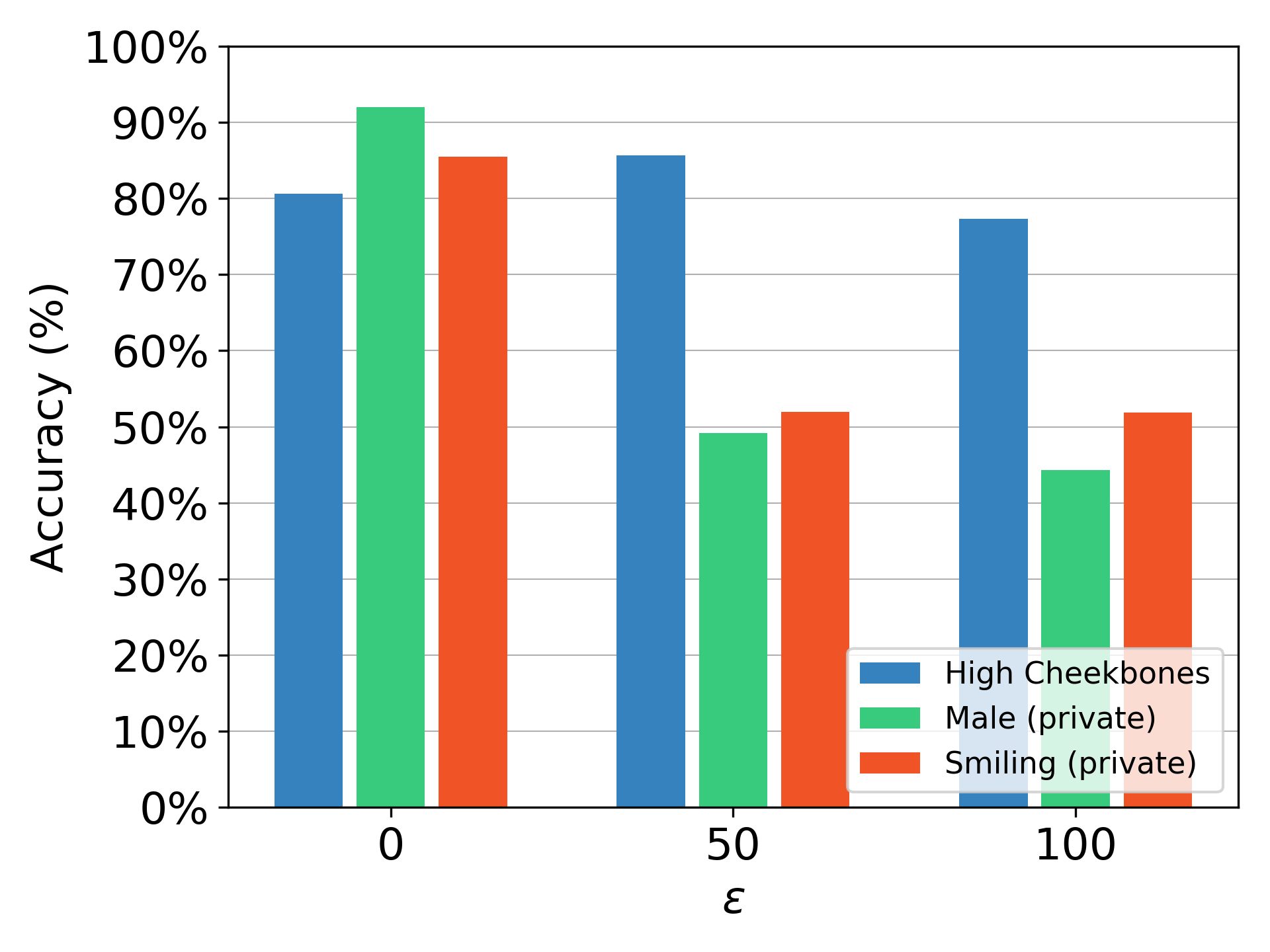}
	}%
	\subfloat[$\lambda_D=0$, $\lambda_1=\lambda_2=0.5$]{
		\includegraphics[width=0.45\linewidth]{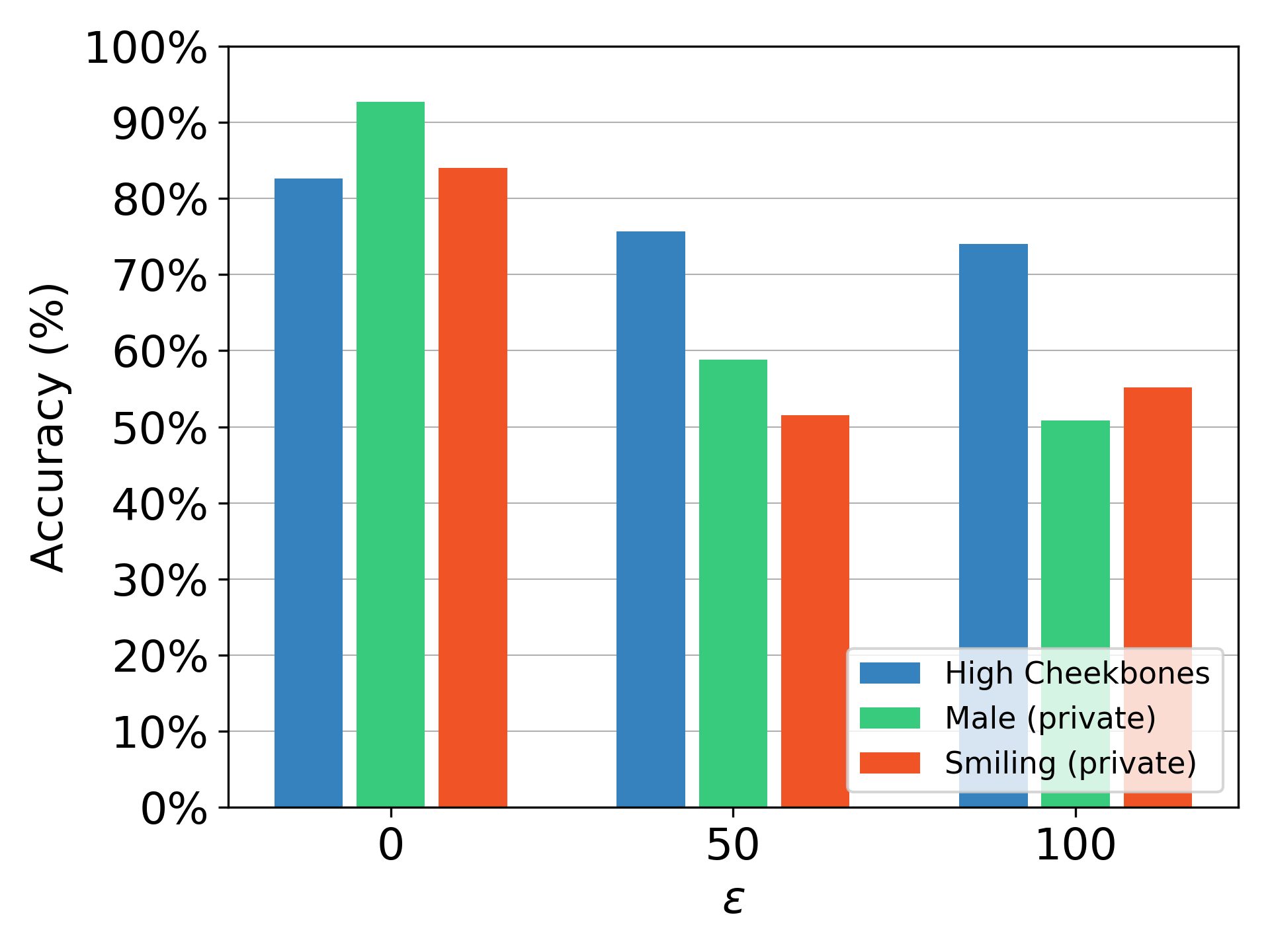}
	}%
	
	\subfloat[$\lambda_D=0$, $\lambda_1=1$, $\lambda_2=0$]{
		\includegraphics[width=0.45\linewidth]{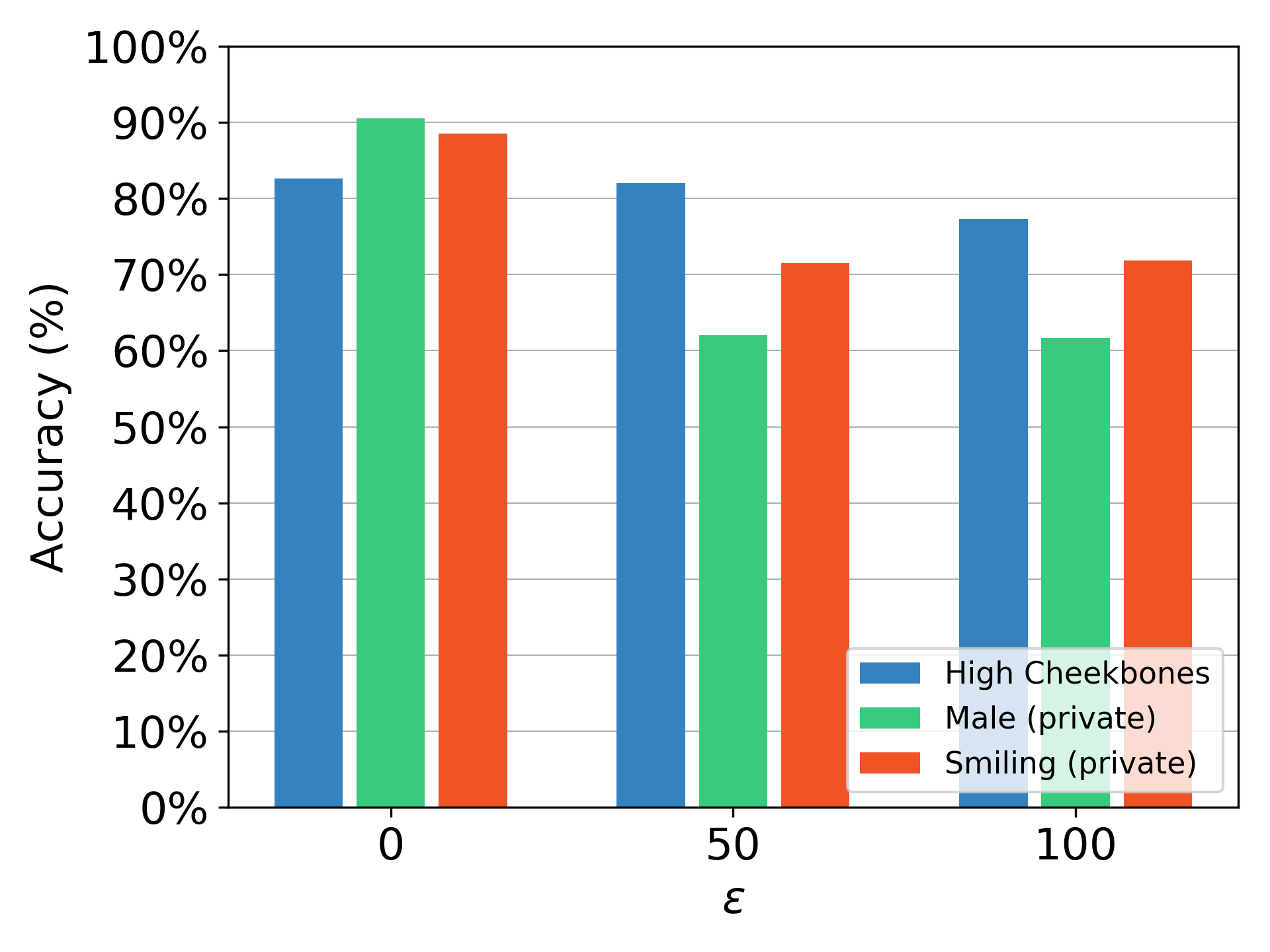}
	}%
	\subfloat[$\lambda_D=0.5$, $\lambda_1=0.5$, $\lambda_2=0$]{
		\includegraphics[width=0.45\linewidth]{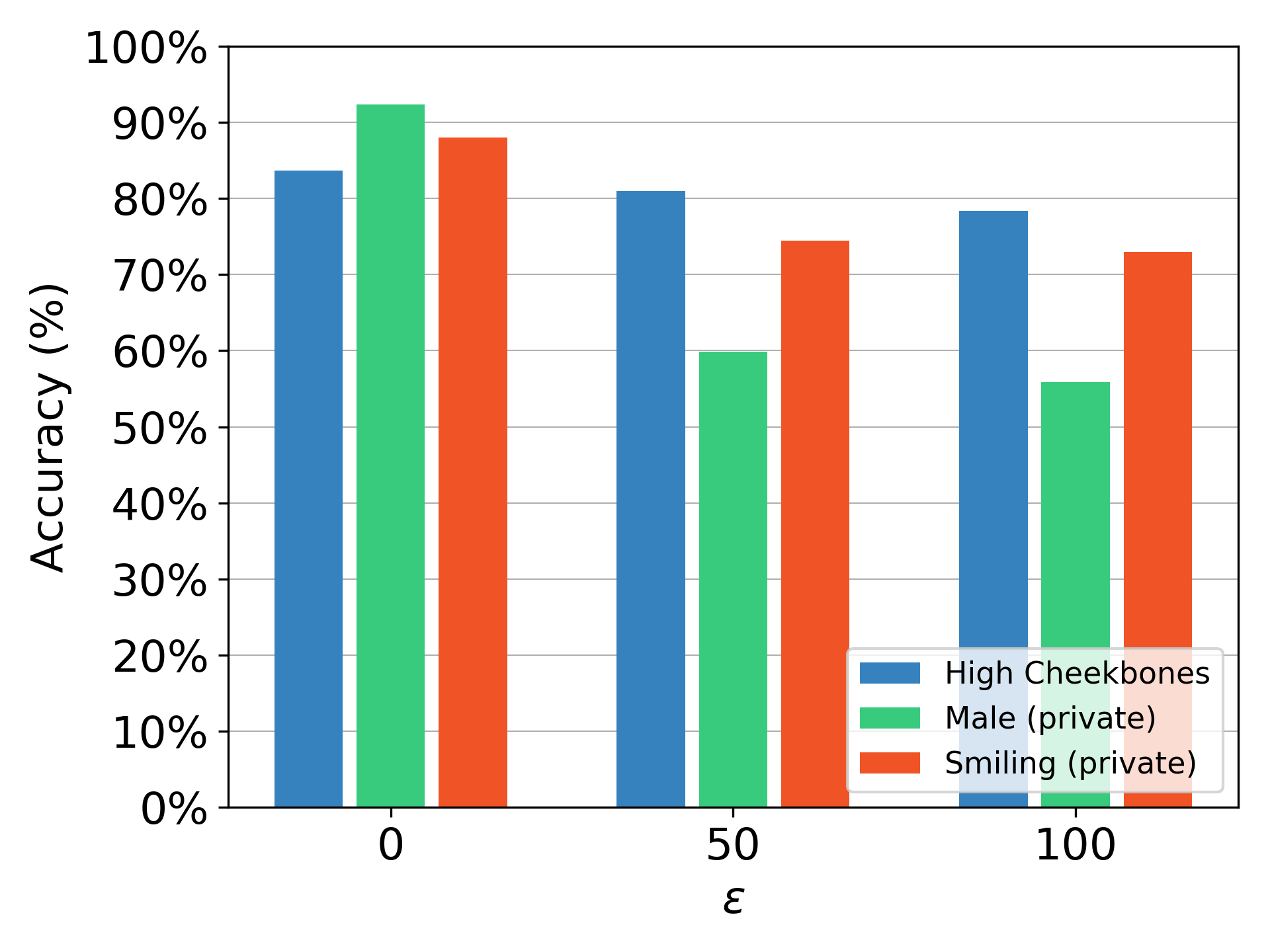}
	}%
	\centering
	\caption{Classification accuracy on three attributes, with variable value of $\epsilon$ and $\lambda$. "Male" and "Smiling" are private attributes.}
	\label{figure:exp_attributes_2}
\end{figure}

\begin{table}[htbp]
	\centering
	\caption{PSNR of different composition of noise and $\epsilon$.}
	\label{tab:experiment4_2}
	\begin{tabular}{cccccc}
		\toprule
		$(\lambda_D, \lambda_1, \lambda_2)$&
		$\epsilon=0$ & $\epsilon=25$ & $\epsilon=50$ & $\epsilon=75$ & $\epsilon=100$ \\ \midrule
		$(1.0, 0.0, 0.0)$ & 22.617 & 15.274 & \textbf{8.812} & 7.456 & 6.720 \\
		$(0.5, 0.25, 0.25)$ & 22.492 & 18.363 & \textbf{12.033} & 10.261 & 9.522 \\
		$(0.33, 0.33, 0.33)$ & 22.497 & 19.408 & 14.473 & 11.976 & 10.671 \\
		$(0.0, 0.5, 0.5)$ & 22.482 & 20.827 & 19.109 & 15.004 & 13.179 \\
		\bottomrule
	\end{tabular}
\end{table}

We next evaluate the reconstruction error under the same scenario. As we can see in Table \ref{tab:experiment4_2}, larger $\epsilon$ and $\lambda_D$ lead to greater defense against reconstruction attacks. If we consider $PSNR=12.033$ an acceptable privacy leakage since it is smaller than the results of similar representation sharing works \cite{xiao2019adversarial} and \cite{Ferdowsi2020PrivacyPreservingIS} we compared in Section \ref{sect:comparison}, then $\lambda_D=\frac{1}{2}, \lambda_k=\frac{1}{2M}$ is a good choice to defend against reconstruction and attribute extraction attacks at the same time.

\subsection{Task-Independence Study}
\label{sect:task_independence_study}

Another advantage of ARS mechanism is that the encoder publishing phase is independent of the collaborative learning phase. Most of the up-to-date joint learning frameworks are task-oriented. For example, in the FederatedAveraging algorithm, a server builds a global model according to a specific task, and communicate the parameters with clients. For prior data sharing frameworks, the networks to extract latent representations of data are usually trained with the task-oriented models (such as classifiers). The accuracy of models trained from these representations is regarded as a part of optimization objective functions of feature extracting networks. Task-dependence causes low data utilization. If parties in an existing collaborative learning framework have a new deep learning task, they have to build another framework and train feature extracting networks once again, which results in a heavy cost to transform data and train deep learning models.

In the ARS mechanism, apart from private attributes that are protected, the representation generating process is independent of deep learning tasks. We set up a series of experiments on CelebA to demonstrate this property. We select five of the forty attributes in CelebA and set a binary classification for each attribute, none of them are private to users. In the collaborative learning phase, participants train five classification networks on the latent representations. Each classifier corresponding to an attribute. Since the MSE loss is only related to the encoder, but not to deep learning tasks, we only focus on classification accuracy. Table \ref{tab:experiment5} shows the results with various scale of noise.

\begin{table}[htbp]
	\centering
	\caption{Classification accuracy in different tasks. For each of the five attributes from CelebA dataset, we train classifiers to predict whether the label is positive or negative from $\hat{z}$ generated by the same public encoder.}
	\label{tab:experiment5}
	\begin{tabular}{cccccc} \toprule
		Attribute
		& $\epsilon=0$ & $\epsilon=25$ & $\epsilon=50$ & $\epsilon=75$ & $\epsilon=100$ \\ \midrule
		Heavy Makeup              & 86.4\% & 85.7\% & 85.0\% & 87.5\% & 85.9\% \\
		High Cheekbones           & 80.6\% & 84.2\% & 80.1\% & 82.8\% & 82.1\% \\
		Male                       & 91.7\% & 91.7\% & 90.0\% & 89.3\% & 88.3\% \\
		Smiling                    & 88.0\% & 84.7\% & 87.9\% & 87.5\% & 84.3\% \\
		Wearing Lipstick          & 90.2\% & 80.6\% & 86.7\% & 85.0\% & 82.8\% \\ \bottomrule
	\end{tabular}
\end{table}

As shown in the results, classification accuracy is higher than 73\%, which indicates acceptable correctness. Only the accuracy of predicting the attribute "Mouth Slightly Open" is lower than 80\% with the increase of $\epsilon$. The results demonstrate that ARS is task-independent. If there are demands for data sharing, the initiator can just train the public autoencoder, without considering how data representations would be used. For participants, the only remarkable thing is to determine their private attributes, and generate adversarial noise to preserve these attributes. This property prevents the utility of latent representations from being limited to specific tasks, and ensures the robustness of shared data. Representations generated by users independently can achieve good performance in various tasks if the value of hyper-parameter $\epsilon$ is chosen well. Consequently, ARS can preserve privacy during the data sharing process, while maintaining the utility of data in collaborative learning.

\section{Discussion}
\label{sect:discussion}
\subsection{Discussions on Noise Masking}
\label{sect:noise_masking}

We focus on the security of noise masking mechanism by studying whether it can defend against brute-force searching attacks. An attacker can randomly enumerate several mask vectors, train inverse models on representations with these mask vectors respectively and take the vector that performs best in reconstructing others' data as a good approximation of the victim's mask. We explore experimentally the relationship between the reconstruction loss $\mathcal{L}_R$ and the overlapping rate of masks held by attackers and defenders, which equals to the Hamming distance of the mask vectors divided by their dimension. The experiment is conducted on MNIST, with settings stated in Section \ref{sect:experiments}. As Table \ref{tab:mask} illustrates, a higher overlapping rate leads to a higher risk of privacy leakage. So we'll next study the overlapping of masks.
\begin{table}[htbp]
	\centering
	\caption{Reconstruction loss with various overlapping rate of masks held by attackers and defenders.}
	\begin{tabular}{c|ccccc}
		\hline
		Overlapping Rate & 0\%	& 25\%	& 50\%	& 75\%	& 100\% \\ \hline
		$\epsilon=50$	& 0.119	& 0.101	& 0.082	& 0.048	& 0.021 \\ \hline
		$\epsilon=100$	& 0.179	& 0.156	& 0.128	& 0.09	& 0.025 \\ \hline
	\end{tabular}
	\label{tab:mask}
\end{table}

For any $n$-dimensional mask vectors $\mathbf{m}_1$ and $\mathbf{m}_2$, we denote the Hamming distance between them as $H(\mathbf{m}_1,\mathbf{m}_2)$, and define the overlapping rate between $\mathbf{m}_1$ and $\mathbf{m}_2$ as $o(\mathbf{m}_1,\mathbf{m}_2)=\frac{n-H(\mathbf{m}_1,\mathbf{m}_2)}{n}$. Then we have 
\begin{equation}
	\mathbb{P}\left [n-H(\mathbf{m}_1,\mathbf{m}_2) = i\right ] = \frac{1}{2^n}\binom{n}{i},
\end{equation}
which means that $X = n-H(\mathbf{m}_1,\mathbf{m}_2) \sim B(n,0.5)$.

Suppose $t$ is a real number such that $\frac{1}{2} <t \leq 1$, then from the De Moivre-Laplace theorem, the probability that $\mathbf{m}_1$ and $\mathbf{m}_2$ have $t \cdot n$ bits different is:
\begin{equation}
	\begin{aligned}
		& \quad \lim_{n \to \infty} \mathbb{P} \left [o(\mathbf{m}_1,\mathbf{m}_2) \geq t \right ] \\
		& = \lim_{n \to \infty} \mathbb{P} \left [tn \leq n-H(\mathbf{m}_1,\mathbf{m}_2) \leq n \right ] \\
		& = \lim_{n \to \infty} \mathbb{P} \left [(2t-1)\sqrt{n} \leq \frac{X-\frac{1}{2}n}{\frac{1}{2}\sqrt{n}} \leq \sqrt{n} \right ] \\
		& = \lim_{n \to \infty} \frac{1}{2\pi} \int_{(2t-1)\sqrt{n}}^{\sqrt{n}} \mathrm{e}^{-\frac{x^2}{2}} \mathrm{d}x \\
		& = \lim_{n \to \infty} \Phi(\sqrt{n}) - \Phi((2t-1)\sqrt{n}) \\
		& = 0.
	\end{aligned}
\end{equation}

Therefore, if the dimension $n$ is large enough, the probability that the overlapping rate of two random $n$-dimensional vectors is larger than $t$ approaches to $0$ for $\forall \frac{1}{2} < t \leq 1$. Moreover, we consider $\epsilon=50$ as an acceptable privacy budget for preserving information of data. That is to say, an attack is considered successful if the overlapping rate of masks held by the attacker and user should be greater than a real number $t$, where $\frac{1}{2} <t \leq 1$. When the dimension of latent representations is large enough, the privacy of users' data can be guaranteed. For example, the dimension of latent representations is $256$. If we accept 75\% as overlapping rate, then we have $ \mathbb{P} \left [o(\mathbf{m}_1,\mathbf{m}_2) \geq 0.75 \right ] \leq 2.449 \times 10^{-16}$, which means that the privacy of data can be considered well preserved by mask mechanism.

\subsection{Future Work}
\label{sect:limitations}

Although ARS shows good performance in our given scenarios, this mechanism still has some limitations. In the horizontal data partitioning scenario, all users generate data representations through the same feature extraction network, which is called the common encoder. This requires the selected initiator to have a sufficient amount of representative data. In practice, however, participants in joint learning may lack enough training samples, or the data of each user may not be independent and identically distributed (non-IID) \cite{kairouz2021advances}. This leads the common encoder to be overfitted, so that it will no longer be valid for all users. To cope with this problem, we tried to let the parties train their own encoders on the local datasets, while ensuring that the latent representations have the same distribution. For example, each user applies variational autoencoder (VAE) \cite{kingma2013auto} to constrain data representations to the standard normal distribution. Nevertheless, it is difficult to guarantee that the same dimension of representations generated by different encoders expresses the same semantic. Therefore, aggregation of the shared data representations will no longer make sense.

In this study, we relax the hypothesis of the data owners, requiring at least one party has sufficient training samples, and the samples of each participant have identical distribution. This assumption in accordance with the realistic B2C (business to customer) settings, where the initiator can be an enterprise with a certain accumulation of data. It can initiate data sharing with individual users and provide pre-trained feature extraction models to them. Future work will be dedicated to collaborative learning on non-IID data, and we believe domain adaptation of data representations is a viable solution to this problem.

We also introduce task-independence of the shared data, and how this property can help to reduce communication cost. In this work, data representations are extracted by unsupervised autoencoders to avoid task-orientation, yet some prior knowledge such as the available labels of data is underutilized. A possible area of future work is multi-task learning \cite{zhang2021survey}, where the given tasks or training labels can be made full use of and contribute to each client’s different local problems. Related studies may bring higher utility of data to task-independent collaborative learning.

\section{Conclusion}
\label{sect:conclusion}

In this work, we propose ARS, a privacy-preserving collaborative learning framework. Users share representations of data to train downstream models. Adversarial noise is used to protect shared data from model inversion attacks. We evaluate our mechanism and demonstrate that adding masked adversarial noise on latent representations has a great effect in defending against reconstruction and attribute extraction attacks, while maintaining almost the same utility as MPC and FL based training. Compared with some prior data sharing mechanisms, ARS outperforms them in privacy preservation. Besides, ARS is task-independent, and requires no centralized control. Our work can be applied to collaborative learning scenarios, and provides a new idea on the research of data sharing and joint learning frameworks.
\bibliographystyle{ACM-Reference-Format}
\bibliography{sample-base}


\end{document}